\documentclass[hidelinks,reprint,aps,showpacs,superscriptaddress,nofootinbib]{revtex4-1}
\usepackage{changepage}

\usepackage{graphicx}
\usepackage{amsmath} %
\usepackage{amssymb}  %
\usepackage{amsthm}
\usepackage{bm}
\usepackage{thmtools,thm-restate}

\usepackage{mathrsfs} %
\usepackage{stmaryrd}      
\usepackage{txfonts} %
\usepackage{lmodern}
\usepackage{mathtools, nccmath}
\usepackage{natbib}
\usepackage{enumerate}

\usepackage{appendix} 
\usepackage{xparse}
\usepackage[shortlabels]{enumitem}

\usepackage{chngcntr}
\usepackage{apptools}

\usepackage{environ}
\NewEnviron{killcontents}{}
\makeatletter{}%
\usepackage{graphicx}
\usepackage{amsthm}   %
\usepackage[bottom]{footmisc}%
\usepackage{amsmath} %
\usepackage{amssymb}  %
\usepackage{mathrsfs} %
\usepackage{stmaryrd} 
\usepackage{nicefrac}
  
\usepackage{enumerate}
\usepackage{appendix} 
\usepackage{enumitem}
\usepackage{comment}

\usepackage{mathtools}

\usepackage{xr}

\usepackage[capitalize]{cleveref}
\crefname{subsection}{Subsection}{Subsections}
\crefname{appendix}{Supplementary Note}{Supplementary Notes}

\usepackage{letltxmacro}
\LetLtxMacro{\originaleqref}{\eqref}
\renewcommand{\eqref}{Eq.~\originaleqref}

\usepackage{color}  %
\RequirePackage[dvipsnames,usenames]{xcolor}

\newif\ifnotes

\notesfalse

\def\shownotes{


}

\newcommand{\be}{\begin{equation}}
\newcommand{\bel}[1]{\begin{equation}\label{#1}}
\newcommand{\qe}{\end{equation}}
\newcommand{\ee}{\end{equation}}
\newcommand{\eeq}{\end{equation}}
\def\ba#1\ea{\begin{align}#1\end{align}}
\def\bann#1\eann{\begin{align*}#1\end{align*}}

\def\bal#1\eal{\begin{align}#1\end{align}}

\newcommand{\betight}{\begin{enumerate}[noitemsep,leftmargin=!,labelwidth=0em,labelindent=1em]
}
\newcommand{\eetight}{\end{enumerate}}

\newtheoremstyle{NatCommun}
{3pt}
{3pt}
{}
{}
{\bfseries}
{:}
{.5em}
{}
\theoremstyle{NatCommun}

\newtheorem{theorem}{Theorem}

\newtheorem{corollary}[theorem]{Corollary}

\newtheorem{example}{Example}

\newtheorem{lemma}[theorem]{Lemma}

\newtheorem{proposition}[theorem]{Proposition}

\newtheorem{definition}{Definition}

\shownotes

\newcommand{\cs}[1]{\mathsf{#1}}   

\DeclareMathOperator{\cycl}{cycl}
\DeclareMathOperator{\fix}{fix}
\DeclareMathOperator{\img}{img}

\DeclareMathOperator{\dom}{dom}

\newcommand{\map}{P}
\newcommand{\temp}{\mathrm{T}}

\newcommand{\auxstates}{k}
\newcommand{\auxsteps}{\ell}
\newcommand{\stablematrix}{L}
\newcommand{\adj}{\mathscr{A}}

\newcommand{\Z}{{\mathcal{Z}}}
\newcommand{\X}{{\mathcal{X}}}
\newcommand{\Y}{{\mathcal{Y}}}

\newcommand{\Cspace}[1]{C_\mathrm{space}(#1)}
\newcommand{\Ctime}[2][\auxstates]{C_{\mathrm{time}}(#2, #1)}

\NewDocumentCommand{\extrabit}{ O{\auxstates} O{f} }{b(#2,#1)}

\usepackage{blkarray}


\raggedbottom

\begin{document}

\title{A space-time tradeoff for implementing a function with master equation dynamics }

 \author{David H. Wolpert}
  \affiliation{Santa Fe Institute, 1399 Hyde Park Road,
  Santa Fe, NM 87501}
   \affiliation{Arizona State University}
 \affiliation{{\tt http://davidwolpert.weebly.com}}
   
 \author{Artemy Kolchinsky}
  \affiliation{Santa Fe Institute, 1399 Hyde Park Road,
  Santa Fe, NM 87501}

 \author{Jeremy A. Owen}
\affiliation{Physics of Living Systems Group, Department of Physics, Massachusetts Institute of Technology, 400 Tech Square, Cambridge, MA 02139.}


\begin{abstract}
Master equations are commonly used to model the dynamics of physical systems,
including systems that implement single-valued functions like a computer's update step.
However, many such functions
cannot be implemented by any master equation, even approximately, which raises the question of
how they can occur in the real world. Here
we show how any function over some ``visible'' states can be implemented with master equation dynamics --- if the dynamics exploits additional, ``hidden'' states at intermediate times.  
We also show that any master equation implementing a 
function can be decomposed into a sequence of ``hidden'' timesteps, demarcated by changes in what state-to-state transitions have nonzero probability. In many real-world situations there is a cost both for more hidden states and for more hidden timesteps. Accordingly, we derive a ``space-time'' tradeoff between the number of hidden states and the number of hidden timesteps needed to implement any given function.
\end{abstract}


\thanks{This is a post-peer-review version of an article published in \emph{Nature Communications}. The final, published version is available at \url{https://doi.org/10.1038/s41467-019-09542-x}}

\maketitle

\section*{Introduction}

Many problems in science and engineering involve understanding how
a physical system can implement a given map taking its initial, ``input'' state to its 
``output'' state at some later time. Often such a map is represented by some stochastic 
matrix $\map$. 
For example, $\map$ may be a conditional  distribution that governs 
the evolution of some naturally occurring system between two particular moments, and we wish to understand what
underlying physical process could result in that conditional distribution. Alternatively, $\map$ might represent some function $f$ that we wish to implement using a physical process,
e.g., $f$ could be the update function of the logical state of a digital computer.

In this paper we uncover constraints on the amounts of various resources 
that are needed by any system that implements a 
stochastic matrix $\map$. 
Throughout, we suppose that the underlying dynamics of the system  are continuous-time and Markovian.
(Such systems are sometimes said to evolve according to a ``master equation''.)
This basic assumption underlies many analyses in stochastic thermodynamics~\cite{seifert2012stochastic,barato2014unifying,esposito2012stochastic,horowitz_minimum_2017,sagawa2012fluctuation,riechers2017fluctuations}, and applies to many classical physical systems at the mesoscale, as well as semiclassical approximations of open quantum system with discrete states~\cite{van2013stochastic,esposito2010finite}.  Master equations also frequently appear in biology, demography, chemistry, computer science, and various other scientific fields. In addition
to assuming master equation dynamics, 
we focus on the case where $\map$ represents some single-valued function $f : \X \to \X$ over a finite space of ``visible states'' $\X$. For example, this would be the
case for any physical system that implements an
idealized digital device. 


The first resource we consider is the number of
``hidden states'' that are coupled to the states in $\X$ by the master
equation at intermediate times within the time interval $[0, 1]$. The second resource is the number of
successive subintervals of $[0, 1]$ which
are demarcated by moments when the set of state-to-state transitions allowed by the
master equation discontinuously changes. (We refer to each such subinterval as a ``hidden
timestep''.)

In the real world,
often it will be costly to have many hidden states and / or many hidden timesteps. 
For example, increasing the number of hidden states generally requires adding additional storage capacity to the system, e.g., by using additional degrees of freedom. 
Similarly, increasing the number of hidden timesteps carries a ``control cost'', i.e., it increases
the complexity of the control protocol that is used to drive the dynamics of the system. 
Moreover, 
transitions from one timestep to the next, during which the set of allowed state-to-state transition changes, typically require either the 
raising or dropping of infinite energy barriers between states in some underlying phase space~\cite{schnakenberg_network_1976,spohn_kinetic_1980,van_kampen_stochastic_1981,andrieux_fluctuation_2007,maroney2009generalizing}. 
Such operations typically require some minimal amount of time to be carried out.
Accordingly, the minimal number of hidden states and the minimal number of hidden timesteps
that are required to implement any given function $f$ can be viewed as fundamental ``costs of computation'' of a function $f$.

Physics has long been interested in the fundamental costs of performing computation and information processing. 
The most well-known of such costs is ``Landauer's bound'' \cite{landauer1961irreversibility,landauer1991information,bennett1982thermodynamics,sagawa2014thermodynamic,parrondo2015thermodynamics}, which states that 
the erasure of a physical bit, represented by 
a function $f : \{0,1\} \mapsto 0$,
requires the generation of at least $kT \ln 2$ heat when coupled to a heat bath at temperature $T$, assuming the initial value of the bit is uniformly distributed.
Recent studies have extended this bound to give the exact minimal amount of heat needed to
implement arbitrary functions $f$. 
These studies have all focused on implementing 
the given function $f$ with a physical system
whose dynamics can be approximated to arbitrary
accuracy with master equations~\cite{dillenschneider2009memory,sagawa2009minimal,berut2012experimental,diana2013finite,granger2013differential,jun2014high,zulkowski2014optimal,maroney2009generalizing,wolpert2015extending,wolpert_baez_entropy.2016}.
The two costs of computation proposed here arise, implicitly, in these previous analyses, since that the physical systems considered there 
all use hidden states.
However, none of these previous papers
considered the minimal number of hidden states needed to implement
a given function $f$ using master equations. (Rather  
they typically focused on issues related to thermodynamic reversibility.) 

In addition, the processes considered
in these papers all unfold through a sequence of distinct ``timesteps".  
In any single one of those timesteps, 
transitions between some pairs of states are allowed to occur while others are blocked, and the set of  allowed  transitions changes in going from one timestep to the next.
Again, despite their use of such hidden timesteps, none of these previous papers
considered the minimal number of hidden timesteps needed to implement
a function, given a certain number of available hidden states.

Our main results are exact expressions for the minimal number
of hidden states needed to implement a single-valued function $f$, and the
minimal number of hidden timesteps needed to implement $f$
given a certain number of hidden states. These results 
specify a tradeoff between the minimal number of hidden states and the minimal number of hidden timesteps required to implement a given $f$, which is analogous to the
``space-time'' tradeoffs that arise in the study of various models of computation in computer science.
However, here the tradeoff arises from the fundamental mathematical properties of continuous-time Markov processes. 
Moreover, real-world computers are constructed out of circuits, which are networks of computational elements  called  gates, each of which carries out a simple function.
For circuits, the tradeoff between hidden states and hidden timesteps 
that we uncover would apply in a ``local'' sense to the function carried out at each individual gate, 
whereas computer science has traditionally focused on ``global''
tradeoffs, concerning the set of all of those functions and of the network coupling them (e.g., the number of required gates or the ``depth'' of the circuit to compute some complicated $f$).

\section*{Results}

\subsection*{Markov chains and the embedding problem}
\label{sec:prelim}

We consider finite-state systems evolving under time-inhomogeneous continuous time Markov chains, 
which in physics are sometimes called ``master equations''. Such models of
the dynamics of systems are fundamental to many fields, e.g., they are very commonly used in stochastic thermodynamics~\cite{seifert2012stochastic,esposito2010three}. 
We begin in
this subsection by introducing some foundational concepts, which 
do not involve hidden states or hidden timesteps.  

We use calligraphic upper-case letters, such $\X$ and $\Y$, to indicate state spaces. 
We focus on systems with a finite state space.
We use the term {continuous-time Markov chain (CTMC)} $T(t,t')$ to refer to a set of transition matrices indexed by $ t\le t'$ which obey the Chapman-Kolmogorov equation $T(t,t') = T(t'',t')T(t,t'')$ for $t''\in [t,t']$. 
We use  {CTMC with finite rates} to refer to a CTMC such that the derivatives $\frac{d}{dt} T_{ij}(t,t')$ are well-defined and finite for all states $i,j$ and times $t\le t'$~\cite{doob_stochastic_1953}. 
For a given CTMC $T(t,t')$, we use $T_{ij}(t,t')$ to indicate the particular transition probability from state $j$ at time $t$ to state $i$ at time $t'$. 
Note that we do not assume time-homogeneous CTMCs, meaning that in general $T(t,t+\tau) \ne T(t', t'+\tau)$.  
Finally, note that the units of time are arbitrary in our framework, and for convenience we assume that $t=0$ at the beginning of the process and $t=1$ at the end of the process.   

The following definition is standard:
\begin{definition}
A stochastic matrix $\map$ is called {embeddable}
if $P = T(0,1)$ for some CTMC $T$ with finite rates.
\end{definition}
\noindent

As it turns out, many stochastic matrices cannot be implemented by any master equation.
(The general problem of finding a master equation that implements some given stochastic matrix $\map$ is known as the \emph{embedding problem} in the mathematics literature~\cite{kingman_imbedding_1962,lencastre2016empirical,jia2016solution}.)
One necessary (but not sufficient) condition
for a stochastic matrix $\map$ to be implementable with a master equation is~\cite{goodman_intrinsic_1970,kingman_imbedding_1962,lencastre2016empirical}
\ba
\prod_i \map_{ii} \ge \det \map > 0 \,.
\label{eq:det-condition}
\ea 
When $\map$ represents a single-valued function $f$ which is not the identity, $\prod_i P_{ii} = 0$ and the conditions of \cref{eq:det-condition} are not satisfied. 
Therefore, no non-trivial function can be exactly implemented with a master equation. 
However, as we show constructively in Supplementary Note 2, all non-invertible functions (e.g., bit erasure, which corresponds to $\map = \left( \begin{smallmatrix} 1&1\\ 0&0 \end{smallmatrix} \right)$) {can} be approximated arbitrarily closely using master equation dynamics. Intuitively, since the determinant of such functions equals $0$, they
can satisfy \cref{eq:det-condition} arbitrarily closely. 

To account for such cases, we introduce the following definition:

\begin{definition}
\label{def:limitembed}
A stochastic matrix $\map$ is {limit-embeddable} if there is 
a sequence of CTMCs with finite rates, $\{T^{(n)}(t, t') : n=1,2,\dots\}$, such that
\begin{align}
\map =\lim_{n\rightarrow\infty}T^{(n)}(0,1) \,.
\end{align}
\end{definition}
\noindent Note that while
each $T^{(n)}$ has finite rates, in
the limit these rates may go to infinity (this is sometimes called the ``quasistatic limit'' in physics). 
This is precisely what happens in the example of (perfect)  bit erasure, as shown explicitly in 
Supplementary Note 1.


We use the term {master equation} to broadly refer to a CTMC with finite rates, or the limit of a sequence of such CTMCs.

\subsection*{Definition of space and time costs}
\label{sec:defs}

When $\map$ represents a (non-identity) invertible function, 
$\prod_i P_{ii} = 0$, while 
$\det \map$ equals either $1$ or $-1$. So the conditions of \cref{eq:det-condition} are not
even infinitesimally close to being satisfied. This means that any (non-identity)
invertible function cannot be implemented, even approximately, with a master equation. As an example,  
the simple bit flip (which corresponds to the stochastic matrix $\map = \left( \begin{smallmatrix} 0&1\\ 1&0 \end{smallmatrix} \right)$), cannot be approximated by running 
any master equation over a two-state system.

How is it possible then that invertible functions can be accurately implemented by actual physical systems
that evolve according to a master equation?
In this paper we answer this question by showing that any function $f : \X \to \X$ over a set of {visible} states $\X$ can be implemented with a master equation --- as long as the master equation operates over a sufficiently large state space $\Y \supseteq \X$ that may include additional {hidden} states, $\Y \setminus \X$.
%
The key idea is that if $\Y$ is large enough,
then we can design the dynamics over the entire state  $\Y$ to be non-invertible, allowing the determinant condition of \cref{eq:det-condition} to be obeyed, while at the same time the desired
function $f$ is implemented over the subspace $\X$. 
As an illustration, below we explicitly show below how to implement a bit flip 
using a master equation over a 3-state system, i.e., a system 
with one additional hidden state. 

The following two definitions formalize what it means for one stochastic matrix to implement another stochastic matrix over a subset of its states. The first is standard.
\begin{definition}
The {restriction} of a $|\Y|\times|\Y|$ matrix $A$ to the set $\X \subseteq \Y$, indicated as $A_{[\X]}$, is the $|\X|\times|\X|$ submatrix of $A$ formed by only keeping the rows and columns of $A$ corresponding to 
the elements in $\X$.
\label{def:restriction}
\end{definition}

In all definitions below, we assume that $\map$ is a $|\X|\times|\X|$ stochastic matrix.

\begin{definition}
$M$ {implements $\map$ with $\auxstates$ hidden states} if $M$ is a $(|\X|+\auxstates)\times(|\X|+\auxstates)$ stochastic matrix and $M_{[\X]} = \map$.
\label{def:hidden_state}
\end{definition}

To see the motivation of \cref{def:hidden_state}, imagine that $M$ is a stochastic matrix implemented by some process, and $M_{[\X]} = \map$. If at $t=0$ the process is started in some state $i \in \X$, then the state distribution at the end of the process will be exactly the same as if we ran $\map$, i.e., $M_{ji} = \map_{ji}$ for all $j \in \X$.
Furthermore, because  $\sum_{j\in\X} \map_{ji} = 1$, $M_{ji} = 0$ for any $i \in X$ and $j \not\in \X$ (i.e., 
for any $j$ which is a  hidden state). This means that if the process is started in some $i\in \X$, no probability can ``leak" out into the hidden states by the end of the process, although it may pass through them at intermediate times.


The ``(hidden) space cost'' of $\map$ is
the minimal number of hidden states required to implement $\map$:

\begin{definition}
The {(hidden) space cost} of $\map$, 
written as $\Cspace{\map}$, is the smallest $\auxstates$ such that there exists a limit-embeddable matrix $M$ that implements $\map$ with $\auxstates$ hidden states.
\end{definition}

Consider a CTMC $T$ governing the evolution of a system. 
As $t$ increases, 
the set of transitions allowed by the CTMC (that is, the set of states which have $T_{ij}(0,t)>0$) changes. We 
wish to identify the number of such changes between $t=0$ and $t=1$
as the number of {``timesteps''} in $T$.  To formalize this, 
we first define the set of ``one-step'' matrices, which can be implemented by a CTMC which does not undergo 
any changes in the set of allowed transitions:
\begin{definition}
\label{def:singletimestep}
$\map$ is called {one-step} if $P$ is limit-embeddable with a sequence of CTMCs 
$\{T^{(n)} : n=1,2,\dots \}$ such that:
\begin{enumerate}
\item  $T(t,t'):=\lim_{n\rightarrow\infty}T^{(n)}(t,t')$ exists for all $t,t'\in[0,1]$;
\item $T(0, t)$ is continuous in $t \in (0, 1]$ and $T(t', 1)$ is continuous in $t' \in [0, 1)$; 
\label{enu:continuity}
\item For all $i,j$, either $T_{ij}(0,t) > 0$ for all $t\in(0,1)$, or $T_{ij}(0,t) = 0$ for all $t\in(0,1)$.
\label{cond:constantadjacnecy}
\end{enumerate}
\end{definition}

We note two things about our definition of one-step matrices. 
First, the precise semi-open interval used in the continuity condition (condition \ref{enu:continuity})  allows discontinuities in $T$ (and therefore in the set of allowed transitions) at the borders of the time interval. 
Second, we note that the limiting transition matrix $T$ in the above definition is still a CTMC. This is because: (1) a limit of a sequence of stochastic matrices is itself a stochastic matrix, so 
by definition $T(t,t')$ is a stochastic matrix for all $t, t' \in [0, 1]$, and (2) the Chapman-Kolmogorov equation $T(t,t') = T(t'',t')T(t,t'')$ holds (since it holds for each $T^{(n)}$). 
%
A canonical example of a one-step map is bit erasure,
as demonstrated in Supplementary Note 1. 

The definition of one-step matrices allows us to formalize the minimal number of  timesteps it takes to implement any given  $\map$:
\begin{definition}
The 
{(hidden) time cost with $\auxstates$ hidden states} of $\map$,
written as $\Ctime{P}$,
is the minimal number of one-step matrices of dimension $({|\X|+\auxstates}) \times ({|\X|+\auxstates})$
whose product implements $\map$ with $\auxstates$ hidden states.
\end{definition}
%
%
\noindent 
Note that a product of one-step matrices can be implemented
with a CTMC that successively carries out the CTMCs corresponding to each one-step matrix, 
one after the other.  So any stochastic matrix $\map$ with finite time cost can be
implemented as a single CTMC. Moreover, we can rescale units of time so that that product
of one-step matrices is implemented in the unit interval, $t \in [0, 1]$.
Note as well that since one-step matrices can have discontinuities at their borders,
the adjacency matrix of such a product of one-step matrices can
change from one such matrix to the next. 

\subsection*{The space-time tradeoff}
\label{sec:tradeoff}

For the rest of this paper we assume that our stochastic matrix of interest $\map$ is 0/1-valued, meaning that it represents a (single-valued) function $f : \X \to \X$. 
Below, in a slight abuse of previous notation, we will use $\Cspace{f}$ and $\Ctime{f}$ to refer to the space and time cost of implementing $f$. 
Except where otherwise indicated, all proofs are in the Methods section.

As we will show, there is a fundamental tradeoff between the number of available hidden states and the minimal number of timesteps. 
It will be convenient to present it using some standard terminology \cite{higgins1992}. 
For any function $f: X \to X$, we write  {$\fix (f)$} for the number of fixed points of $f$, and {$|\img (f)|$} for the size of the image of $f$. We also write {$\cycl (f)$} for the number of {cyclic orbits} of $f$, i.e., the number of distinct subsets of $X$ of the form
$\{x, f(x), f(f(x)), \ldots, x\}$ where $x$ is not a fixed point of $f$
and each element in the subset has a unique inverse under $f$.

We can now state our main result:

\begin{theorem}
For any single-valued function $f$ and number of hidden states $k$,
\begin{multline}
\Ctime{f}  = \\
\left\lceil\frac{\auxstates + |\X| + \max \! \big[\!\cycl (f) - \auxstates, 0\big] - \fix (f) }{\auxstates + |\X| - |\img (f) |}\right\rceil + \extrabit
\label{eq:mainresult}
\end{multline}
where $\lceil \cdot \rceil$ is the ceiling function and $\extrabit$ equals either zero or one (the precise value of $\extrabit$ is unknown for some functions).
\end{theorem}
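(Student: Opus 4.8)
The plan is to strip away the analytic content, reduce the claim to a purely combinatorial statement about composing functions, and then prove matching (up to one) lower and upper bounds for that combinatorial quantity. The first ingredient is a \emph{characterization of one-step $0/1$ matrices}. In any CTMC with finite rates whose allowed-transition digraph on $(0,1)$ is a fixed set $E$, the sign pattern of $T(0,t)$ equals, for every $t\in(0,1)$, the reachability preorder of $E$; the same holds for a limit of such CTMCs, and the continuity condition in \cref{def:singletimestep} forces $T(0,1)=\lim_{t\to1^-}T(0,t)$ to inherit that pattern. If $T(0,1)$ is the matrix of a function $g$ on $\Y$, this says that whenever $g(j)=i\neq j$ no probability may leave $i$ (otherwise the mass arriving from $j$ would be carried past $i$, and halting it would require a transition-set discontinuity strictly inside $(0,1]$, which is forbidden), so $\img(g)\subseteq\fix(g)$, i.e.\ $g$ is idempotent; conversely every idempotent $g$ is one-step (open exactly the transitions $j\to g(j)$ with large rates). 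Combined with \cref{def:hidden_state} and the no-leakage remark following it, $\Ctime{f}$ becomes the least $N$ for which some $\Y$-extension $g$ of $f$ with $|\Y|=|\X|+\auxstates=:n$ factors as $g=e_N\circ\cdots\circ e_1$ with each $e_i$ idempotent --- minimizing over the extension (in particular over the images of the hidden states) being part of the problem.

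For the \textbf{lower bound} I would seek a potential $\Phi$ on self-maps of $\Y$ with $\Phi(\mathrm{id})=0$, with $\Phi(g)$ bounded below --- for every legal extension $g$ of $f$ --- by the numerator in \cref{eq:mainresult}, and with $\Phi(e\circ h)-\Phi(h)\le n-|\img(f)|$ for every idempotent $e$. The natural ingredients are: (i) the image deficiency $n-|\img(\cdot)|$, since composing with an idempotent $e$ shrinks $|\img|$ by at most the number of non-fixed points of $e$, so a bounded ``collapse budget'' is spent per step; (ii) the non-fixed-point count $n-\fix(\cdot)$, which is where $-\fix(f)$ enters; and (iii) a ``cyclic debt'' charging for the cyclic orbits of $f$, since realizing one requires temporarily parking one of its elements in a hidden state, so only $\auxstates$ orbits can be in progress at once and the remaining $\max[\cycl(f)-\auxstates,0]$ must be rebuilt serially, each adding one unit. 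Dividing the forced total increase by the per-step budget and taking the ceiling yields the first term of \cref{eq:mainresult}.

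For the \textbf{upper bound} I would give a construction matching this to within one: partition the non-fixed visible states together with the $\auxstates$ hidden states into $\lceil\,\cdot\,\rceil$ ``waves'' of size at most $n-|\img(f)|$, and in each wave apply a single idempotent that flushes one batch of states directly onto their eventual $f$-images while reserving one hidden buffer per cyclic orbit currently being rotated, each cyclic orbit being handled by the three-step ``park--shift--unload'' pattern already illustrated for the bit flip. Counting idempotents gives either $\lceil\,\cdot\,\rceil$ or $\lceil\,\cdot\,\rceil+1$ according to whether the final partial wave can be fused with the cyclic-orbit stage, which is exactly the content of $\extrabit\in\{0,1\}$ and of the claim that its value depends delicately on $f$.

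The crux, I expect, is calibrating the potential so that its per-step increment is \emph{exactly} $n-|\img(f)|$ \emph{and} its forced total increment is \emph{exactly} the stated numerator --- in particular making the $\max[\cycl(f)-\auxstates,0]$ bookkeeping agree on the nose between the lower-bound potential and the wave construction, and pinning down precisely when the last wave and the cyclic-orbit handling overlap (the origin of $\extrabit$). By contrast, the characterization of one-step matrices, while requiring some care with the limits of CTMCs and with the semi-open continuity intervals, should be the routine ingredient.
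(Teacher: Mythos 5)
Your reduction of $\Ctime{f}$ to the minimal length of an idempotent factorization is essentially the paper's route: the observation that a one-step $0/1$ matrix representing a function must be idempotent is the transitivity argument of \cref{theorem:transitivity}, and the converse (every idempotent is one-step) is \cref{lemma:mto_implementable}. One caveat: the factors in a minimal-length product realizing $f$ are general one-step \emph{stochastic matrices}, not functions, so ``each one-step function is idempotent'' does not by itself yield the factorization into idempotents. Your reachability-preorder remark contains the needed ingredient, but you still have to extract from each one-step factor $L$ an idempotent function supported inside $\adj[L]$ and check that the product of these sub-supported idempotents still carries out $f$ on $\X$ (the paper's \cref{lem:idem-equiv} and \cref{lem:prod-equiv}); as written, this step is asserted rather than proved.

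The genuine gap is in the combinatorial core. The paper does not prove the minimal-idempotent-number formula from scratch: it cites Saito's theorem (\cref{thm:saito}) for the zero-hidden-state case and reduces the general case to it by optimizing over extensions $g:\Y\to\Y$ of $f$ --- showing the optimal $g$ either fixes each hidden state or maps it into a cyclic orbit of $f$ (thereby destroying that orbit), which is where $\max[\cycl(f)-\auxstates,0]$ and the denominator $\auxstates+|\X|-|\img(f)|$ arise, statically, as properties of $g$. You instead propose to prove both bounds directly via a potential $\Phi$ and a ``wave'' construction, but the potential is never exhibited, and calibrating it so that its forced total increment is exactly $|\X|+\cycl(f)-\fix(f)$ while its per-idempotent increment is exactly the image defect is precisely the content of Saito's gravity/defect analysis --- a dedicated semigroup-theoretic result, not a routine calculation. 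In particular, the $\cycl(f)-\fix(f)$ correction in the numerator does not follow from image-deficiency bookkeeping alone, and your dynamic ``only $\auxstates$ orbits in progress at once'' heuristic is not the mechanism by which the $\max$ term actually appears. Absent an explicit $\Phi$ with the three properties you require of it, the lower bound --- and hence the theorem --- is not established; if you do not want to re-derive Saito's theorem, the efficient path is the paper's: prove the reduction to idempotents, quote \cref{thm:saito}, and carry out the optimization over extensions.
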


Several corollaries from this result follow immediately:
\begin{corollary}
For any single-valued function $f$ and number of hidden states $k$,
\begin{equation}\label{c21} \Ctime{f} \approx \frac{|\X| + \cycl(f) - \fix(f)}{\auxstates + |\X| - |\img(f)|} + 1\end{equation}
and
\begin{equation} \Ctime{f} \le \frac{1.5 \times |\X|}{k} + 3 \,. \end{equation}



\label{corr:main_result}
\end{corollary}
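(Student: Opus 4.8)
To prove \cref{corr:main_result}, the plan is to read off both inequalities directly from the exact formula \eqref{eq:mainresult}, using a single combinatorial estimate on the invariants of $f$. Write $N := \auxstates + |\X| + \max[\cycl(f)-\auxstates,0] - \fix(f)$ and $D := \auxstates + |\X| - |\img(f)|$ for the numerator and denominator appearing in \eqref{eq:mainresult}; since $|\img(f)| \le |\X|$ we have $D \ge \auxstates$, and we may assume $D > 0$ (otherwise $\Ctime{f}$ is infinite and both claims are understood to hold trivially). The fact I would record first is that the cyclic orbits of $f$ are pairwise disjoint subsets of $\img(f)$, each of size at least $2$ --- a cyclic orbit contains no fixed point by definition, and each of its elements is the image under $f$ of its predecessor in the orbit --- and that the cyclic orbits are moreover disjoint from the set of fixed points, which is itself a subset of $\img(f)$. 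Counting elements then gives $\fix(f) + 2\,\cycl(f) \le |\img(f)| \le |\X|$, and in particular $\cycl(f) \le |\X|/2$.

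For the approximation \eqref{c21}, the idea is just to strip the ceiling and the term $\extrabit$ from \eqref{eq:mainresult}: since $\lceil x\rceil \le x+1$ and $\extrabit\in\{0,1\}$, the quantity $\Ctime{f}$ equals $N/D$ up to a bounded additive error. It then remains to compare $N/D$ with $(|\X| + \cycl(f) - \fix(f))/D$; a one-line computation shows their difference equals $(\auxstates + \max[\cycl(f)-\auxstates,0] - \cycl(f))/D$, which is $0$ when $\cycl(f)\ge\auxstates$ and otherwise equals $(\auxstates - \cycl(f))/D$, a number in $(0,1]$ because $D\ge\auxstates$. Hence $\Ctime{f}$ and $(|\X| + \cycl(f) - \fix(f))/D + 1$ differ by a bounded amount, which is exactly what the $\approx$ in \eqref{c21} records.

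For the explicit bound, I would estimate $N$ and $D$ crudely: using $\fix(f)\ge 0$ and $\cycl(f)\le|\X|/2$ gives $N \le \auxstates + |\X| + |\X|/2 = \auxstates + 1.5\,|\X|$, while $D\ge\auxstates$, so $N/D \le 1 + 1.5\,|\X|/\auxstates$; adding back the $\le 1$ contributed by the ceiling and the $\le 1$ contributed by $\extrabit$ then gives $\Ctime{f} \le 1.5\,|\X|/\auxstates + 3$. I do not expect a real obstacle anywhere in this argument: the only step that takes any care is the combinatorial inequality of the first paragraph --- checking that the cyclic orbits and the fixed points occupy disjoint portions of $\img(f)$ and that each cyclic orbit has at least two elements --- after which both claims are elementary arithmetic, granted \eqref{eq:mainresult}.
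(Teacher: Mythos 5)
Your proof is correct and follows essentially the same route as the paper's: both claims are read off the exact formula by discarding the ceiling and the term $b(f,k)$ (each worth at most one timestep) and splitting into the cases $k \le \cycl(f)$ and $k > \cycl(f)$. The only cosmetic difference is in the second bound, where the paper evaluates the approximation at an extremal function (a fixed-point-free involution, with $\cycl(f)=|\X|/2$, $\fix(f)=0$, $|\img(f)|=|\X|$) while you bound the invariants directly via $\fix(f)+2\cycl(f)\le|\img(f)|\le|\X|$ --- the same estimate, made explicit.
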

\noindent
In addition, a ``converse'' of our main result gives $k_{\mathrm{min}}(f, \tau)$, the minimal number of
hidden states $\auxstates$ needed to implement $f$, assuming we are allowed to use at most $\tau$ timesteps.  
The exact equation for $k_{\mathrm{min}}(f, \tau)$ is presented in the Methods section. A simple 
approximation of that exact converse follows from \cref{corr:main_result}:
\begin{align}
k_{\mathrm{min}}(f, \tau) \approx \frac{\cycl(f) + |\X|(2-\tau) - \fix(f)}{\tau - 1} + |\img(f)| \,.
\end{align}

\begin{figure}
	\centering
	\includegraphics[width=1\columnwidth]{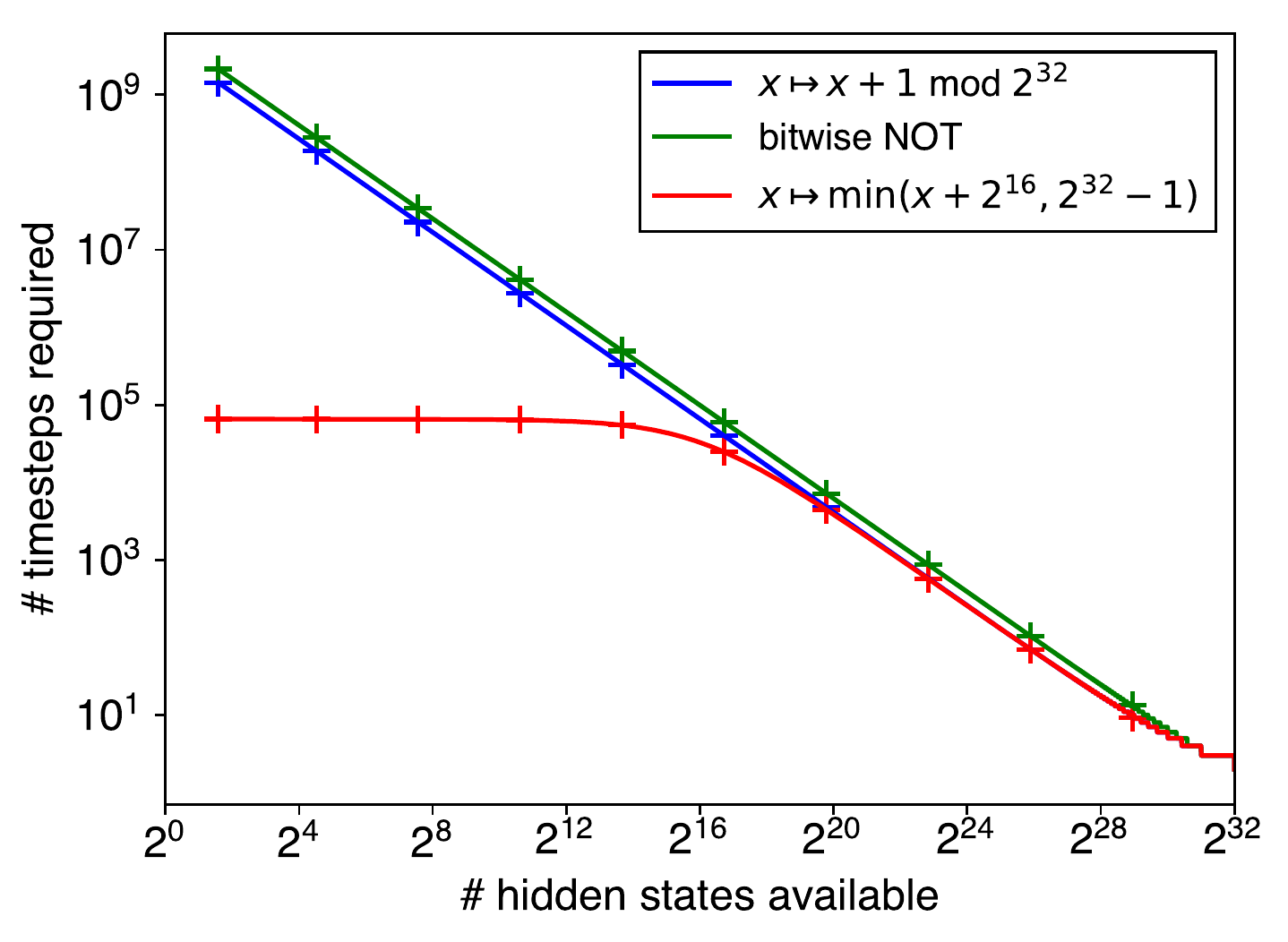}
	\caption{\textbf{The space-time tradeoff for three functions}. The domain of all three functions is $\X = \{0, \ldots, 2^{32}-1\}$. Solid lines show exact results, crosses indicate the approximation given by \eqref{c21}.
	}
	\label{fig:tradeoff}
\end{figure}

Although formulated in terms of time cost, our results have
some implications for space cost:
\begin{corollary}
For any non-invertible function $f$, $\Cspace{f} = 0$.
For any invertible $f$ (except the identity), $\Cspace{f} = 1$.
\label{coroll:simple}
\end{corollary}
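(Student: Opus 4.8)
The plan is to prove the two assertions separately, using the determinant obstruction \eqref{eq:det-condition} for the lower bounds and the main Theorem (together with the explicit constructions already promised in the Supplementary Notes) for the upper bounds. Throughout, recall that $\Cspace{f}$ is by definition a nonnegative integer, so for a statement of the form $\Cspace{f}=0$ it suffices to exhibit \emph{one} limit-embeddable matrix implementing $f$ with $0$ hidden states, and for $\Cspace{f}=1$ it suffices to rule out $0$ hidden states and to exhibit a limit-embeddable implementation with exactly $1$ hidden state.

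For the non-invertible case: I would note that the $|\X|\times|\X|$ matrix $\map$ of $f$ has $\det\map=0$ (some column of $\map$ is repeated), so \eqref{eq:det-condition} is satisfiable arbitrarily closely; the constructive sequence $T^{(n)}(0,1)\to\map$ built in Supplementary Note 2 then shows $\map$ is limit-embeddable, whence $\Cspace{f}=0$. Alternatively, and more in the spirit of ``implications of our results,'' this also follows from the main Theorem with $\auxstates=0$: since $f$ is non-invertible, $|\img(f)|<|\X|$, so the denominator $\auxstates+|\X|-|\img(f)|=|\X|-|\img(f)|$ in \eqref{eq:mainresult} is strictly positive and the formula returns a finite $\Ctime[0]{f}$, i.e.\ $f$ is implementable with $0$ hidden states. (For the identity, excluded from the second assertion, $\prod_i\map_{ii}=1\ge\det\map=1>0$, so $\map=I$ is already embeddable with $0$ hidden states.)

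For the invertible, non-identity case: the lower bound $\Cspace{f}\ge 1$ is the quantitative version of the remark already made in the text. If $\map$ were limit-embeddable with $0$ hidden states, it would be a limit of matrices $T^{(n)}(0,1)$, each embeddable with finite rates and hence satisfying $\prod_i T^{(n)}_{ii}(0,1)\ge\det T^{(n)}(0,1)>0$; passing to the limit, continuity of the determinant forces $\det\map\ge 0$, and since $\map$ is a non-identity permutation matrix $\det\map=\pm1$, so $\det\map=1$, while continuity of the diagonal entries forces $\prod_i\map_{ii}\ge\det\map=1$ — contradicting $\prod_i\map_{ii}=0$. For the upper bound $\Cspace{f}\le1$, I would invoke the main Theorem with $\auxstates=1$: since $f$ is invertible, $|\img(f)|=|\X|$, so the denominator in \eqref{eq:mainresult} equals $1>0$ and $\Ctime[1]{f}$ is finite; thus there is a finite product of one-step $(|\X|+1)\times(|\X|+1)$ matrices implementing $f$ with one hidden state, and since each one-step matrix is limit-embeddable and a product of limit-embeddable matrices is again limit-embeddable (concatenate the approximating finite-rate CTMCs over consecutive subintervals and rescale time, as noted after the time-cost definition), this product is a limit-embeddable implementation of $f$ with one hidden state. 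Hence $\Cspace{f}=1$. As a self-contained alternative that bypasses the main Theorem, one can give the direct construction generalizing the three-state bit-flip example: route the orbit of each nontrivial cycle of the permutation $f$ through the single shared hidden state in a sequence of ``move all probability from one state to another'' steps (each such step is a non-invertible stochastic matrix, hence limit-embeddable), ending with the hidden state emptied.

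The only step that is more than a one-line determinant/continuity argument or a direct quotation from earlier in the paper is the claim that a product of limit-embeddable matrices is limit-embeddable — i.e.\ that the product of one-step matrices furnished by the main Theorem genuinely counts as an implementation in the sense of the space-cost definition. Making this precise requires taking a diagonal subsequence over the approximating CTMCs of the individual factors, gluing them (with a mild smoothing at the junction times so that the rates stay finite for all $t\le t'$), and checking that the time-$1$ transition matrix of the glued CTMC converges to the product. This bookkeeping is where I expect the only real friction to be; everything else is immediate from \eqref{eq:det-condition}, \eqref{eq:mainresult}, and the constructions already referenced in the Supplementary Notes.
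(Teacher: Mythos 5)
Your proposal is correct and follows essentially the same route as the paper: the paper's own proof likewise reads off both upper bounds from the finiteness of the denominator in \eqref{eq:mainresult} (at $k=0$ for non-invertible $f$, at $k=1$ for invertible non-identity $f$), and obtains the impossibility of $k=0$ for invertible $f$ from the determinant obstruction \eqref{eq:det-condition} discussed in the text just before the corollary. The extra care you take --- passing $\prod_i T^{(n)}_{ii}(0,1)\ge\det T^{(n)}(0,1)>0$ to the limit, and gluing the approximating CTMCs of consecutive one-step factors so that the product counts as a single limit-embeddable implementation --- is precisely the bookkeeping the paper delegates to the remark following the time-cost definition and to Supplementary Note 2 (see \eqref{eq:composing_ctmcs}), so no new idea is required.
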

\begin{proof}
If $f$ is non-invertible  $f$, $|\X| - |\img f| \neq 0$, so $\Ctime[0]{f}$ is finite.
Therefore, by definition of $C_\mathrm{space}$ and $C_\mathrm{time}$, 
$\Cspace{f} = 0$.
For invertible $f$, the denominator of \eqref{eq:mainresult} is zero 
if $k=0$. So while it is possible to implement any such $f$ (except the identity)
in a finite number of timesteps if we can use at least one hidden state, it
is impossible if we do not have any hidden states, i.e., 
$\Cspace{f} = 1$.
\end{proof}

\cref{fig:tradeoff} illustrates the tradeoff between space cost and time cost for 
three different functions over $\X = \{0, \dots, 2^{32}-1\}$. The first function (in blue) is an invertible ``cycle'' over the state space, computed as  $x \mapsto x +1 \; \mathrm{ mod }  \; 2^{32}$.  The second function (in green) is an invertible bitwise NOT operation, in which each element of $\X$ is treated as a 32-bit string and the value of each bit is negated. 
The third function (in red) is an addition followed by clipping to the maximum value, computed as $x \mapsto \min(x + 2^{16} , 2^{32}-1)$. Exact results (solid lines), as well as the approximation of  \eqref{c21} from \cref{corr:main_result} (crosses), are shown. 
These results show that achieving the minimal space costs given in \cref{coroll:simple}
may result in a very large time cost.

There are two important special cases of our result, which are analyzed in more detail in the Methods section.
First, when at least $\vert \img(f) \vert$ hidden states are available, any $f$ can be implemented in exactly two timesteps. 
Second, when $f$ is a cyclic permutation
and there is one hidden state available, the time cost is exactly $|\X| + 1$.

We emphasize that the proofs of these results (presented in the Methods section)
are constructive; for any choice of function $f$ and number of hidden states $k$, 
this construction gives a sequence of CTMCs with finite rates whose limit implements $f$
while using $k$ hidden states and the minimal number of hidden timesteps for that number of
hidden states. 
These constructions involve explicitly time-inhomogeneous master equations.
Indeed, for any time-homogeneous master equation, the set of allowed state transitions can never change, i.e., the only functions $f$ that can be implemented with such a master equation
are those that can be implemented in a single timestep.
Therefore our demonstrations of functions $f$ with time cost of $2$ proves
that there are maps that cannot be implemented unless one
uses a time-inhomogeneous master equation, no matter how many
hidden states are available.

\subsection*{Explicit constructions saturating the tradeoff}
\label{sec:construct}

We now illustrate our results using two examples. 
These examples use the fact that any idempotent function is one-step, as proved in \cref{lemma:mto_implementable} in the Methods section. (We remind the reader that a function $f$ is called idempotent if $f(x)=f(f(x))$ for all $x$.)

\begin{example}
\label{ex:capacitor}

Suppose we wish to 
implement the bit flip function $f:x \mapsto \neg x$ over $\X = \{0,1\}$. 
By \cref{coroll:simple}, since this map is invertible, 
we need exactly one hidden state to implement it. 

We introduce a space of three states $\Y = \{0, 1, 2\}$, and seek a sequence of idempotent functions
over $\Y$ that collectively interchange $0 \leftrightarrow 1$.
It is straightforward to confirm that our goal is met by the following sequence of idempotent functions: 
\begin{enumerate}
\item $\{1, 2\} \mapsto 2, \qquad 0 \mapsto 0$;
\item $\{0, 1\} \mapsto 1, \qquad 2 \mapsto 2$;
\item $\{0, 2\} \mapsto 0, \qquad 1 \mapsto 1$;
\end{enumerate}
Each idempotent can be implemented with the one-step CTMC described in Supplementary Note 2. 
This explicitly shows how to
implement a bit flip using one hidden state and three hidden timesteps.

Evaluating \cref{eq:mainresult} with $\auxstates =1$, $|\X|=|\img(f)| = 2$, $\cycl(f) = 1$, and $\fix(f)=0$ gives
\begin{equation}
\Ctime[1]{f} = 3 + \extrabit[1] \,.
\end{equation}
Thus, the above construction has optimal time cost (and, in this case, $\extrabit[1] = 0$).
\end{example}

\begin{figure*}
  \centering
    \includegraphics[width=0.65\textwidth]{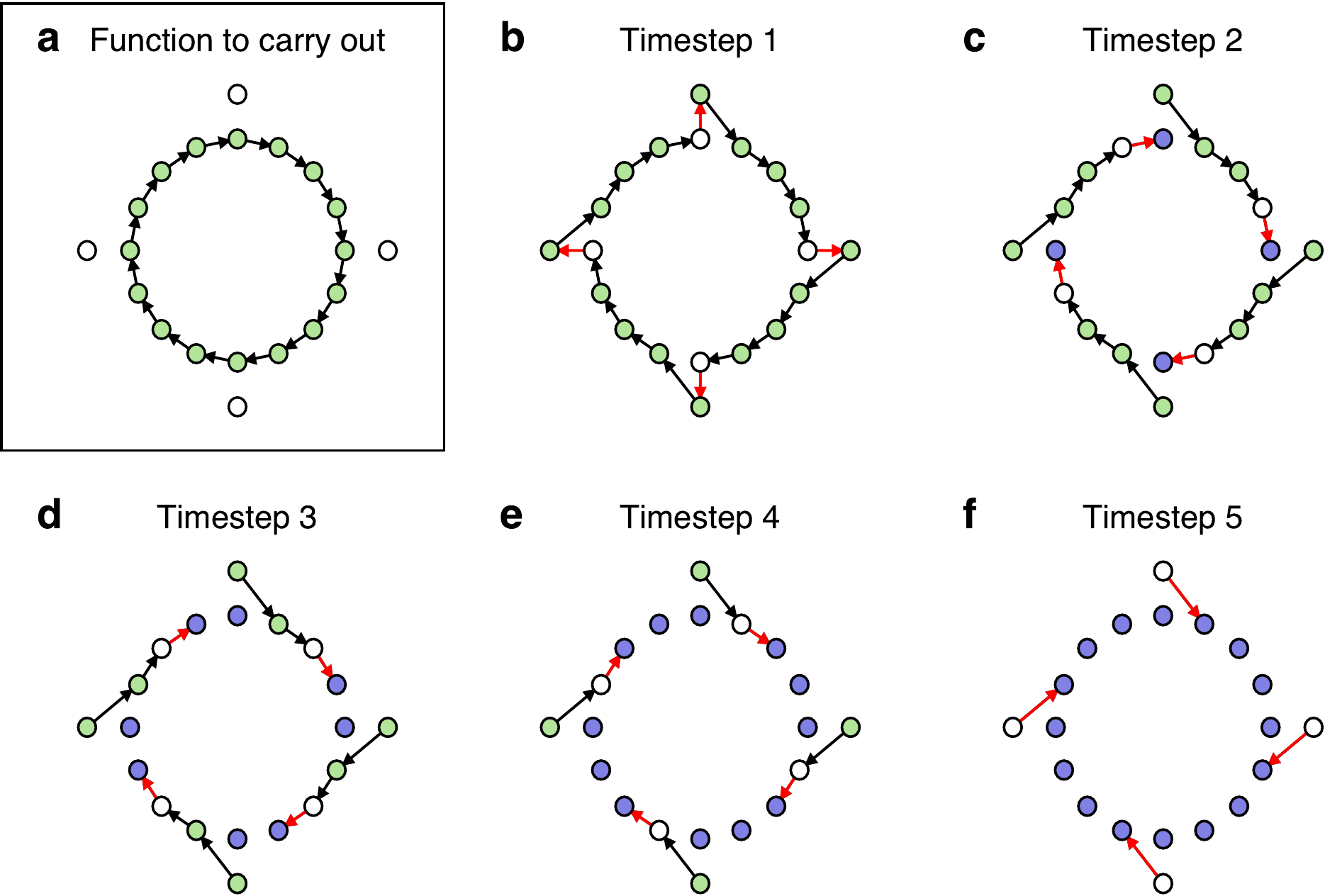}
  \caption{\textbf{Minimal-timestep implementation of a cyclic permutation with 4 hidden states}. The implementation carries out the function $f : x \mapsto x + 1 \text{ mod } 16$ over 16 states (green circles in \textbf{a}), using 4 hidden states (white
circles in \textbf{a}).   
  In all subplots, white nodes indicate states that cannot have any probability, light green  nodes with outgoing black arrows indicate states that may have positive probability but are not yet mapped to their final output, and purple nodes indicate states that may have positive probability and have been mapped to their final outputs. Subplots \textbf{b}--\textbf{f} show the state of the system after each of the 5 timesteps required to carry out $f$, where red arrows indicate the idempotent function carried out in each timestep (in each timestep, any state without outgoing red arrows is mapped to itself).
  \label{fig:cyclic-steps}
   }
\end{figure*}

The following example
demonstrates the implementation of a more complicated function, involving
a greater number of hidden states.

\begin{example}
\label{ex:cycle16}
Suppose we wish to implement the function
\begin{equation}
f(x) := x + 1 \; \mathrm{mod} \; 16
\end{equation}
over $\X = \{ 1 , \dots, 16 \}$.  For example, this kind of ``cyclic'' function may be used to keep track of a clock in a digital computer. Suppose also that 4 hidden states are available, so $\Y = \{1,\dots, 20\}$. The overall function to carry out, along with the hidden states, are shown in \cref{fig:cyclic-steps}A, along with a sequence of 5 idempotent functions over $\Y$ that carries out $f(x) = x + 1 \; \mathrm{mod} \; 16$ over $\X$. (See caption for details.)
%



Evaluating \cref{eq:mainresult} for $\auxstates =4$, $|\X|=|\img(f)| = 16$, $\cycl(f) = 1$, and $\fix(f)=0$ gives
\begin{equation}
\Ctime[4]{f} = 5 + \extrabit[4] \,.
\end{equation}
Thus, the above construction of 5 idempotents achieves the minimal time cost for $4$ hidden states, and $\extrabit[4]=0$.
\end{example}
\noindent
See \cite{saito_products_1989} for details on how to decompose more complicated
functions into products of idempotent functions.


\subsection*{Visible states that are coarse-grained macrostates}
\label{sec:coarsegraining}

Our analysis above concerns scenarios where the full set of  
states is the union of the set of visible states
with the (disjoint) set of hidden states. However, in many real-world physical computers, $f$ is carried out over a set of macrostates that coarse-grain an underlying set of microstates. We call such macrostates ``logical states'' (logical states are sometimes called the states of the ``information bearing degrees of freedom''~\cite{bennett2003notes}). The map over the logical states, as specified by $f$, is 
induced by a master equation evolving over the underlying set of microstates. In such scenarios, we cannot express the full state space as the disjoint union of the logical states with some other ``hidden'' states, since the logical states are macrostates. 
This means that such scenarios can{not} be immediately analyzed with our framework.

However, as shown in the Methods, 
we can generalize our framework to include such maps  
carried out over logical macrostates, in such a way that scenarios involving
disjoint unions of visible and hidden states are just a special case.
It turns out that the results of the previous sections
apply without any modification, so long as we identify ``the number
of hidden states'' in those results with the difference between the number of microstates and the number of macrostates.

\begin{example} 
Suppose we have two
quantum dots, each with two possible states, written as $u$ and $w$, respectively, 
that evolve jointly according to a CTMC~\cite{diana2013finite}.
In this scenario the set of microstates  is the set of all four pairs $(u, w)$.

Suppose further that we identify a logical bit 
with the value of $u$.
Then a CTMC over $(u, w)$ will flip the value of the visible
state in two (hidden) timesteps if it implements the following sequence of two
idempotent functions:
\begin{enumerate}
\item $\{(0, 0), (0, 1)\} \mapsto (0, 0)$; $\{(1, 0), (1, 1)\} \mapsto (1, 1)$
\item $\{(0, 0), (1, 0)\} \mapsto (1, 0)$; $\{(1, 1), (0, 1)\} \mapsto (0, 1)$
\end{enumerate}
\noindent
Since there are four microstates and two logical states (given by
the value $u$), this means there are two ``hidden states''. Thus, applying
\cref{thm:auxstates}, with the appropriate change to how $\auxstates$ is defined,  
we conclude that no master equation can implement the bit flip using less than two timesteps.   This minimal time cost is in fact achieved by the construction above.
\label{ex:coarse_graining}
\end{example}

\section*{Discussion}

Many single-valued functions from initial to final states cannot be 
realized by master equation dynamics, even using time-inhomogeneous 
master equations. In this paper we show that any single-valued function  $f$ 
over a set of ``visible'' states $\X$ can be implemented, to arbitrary accuracy---if additional ``hidden'' states not in $\X$ are 
coupled to $\X$ by the underlying master equation. We refer to the minimal
number of hidden states needed to implement $f$ as the ``space cost'' of implementing $f$.
%
%
In addition, we show that given any function $f$ and number of 
available hidden states $\auxstates$, there is 
an associated minimal number of timesteps that are needed by any master equation
to implement $f$, where we define a ``timestep'' as a time interval
in which the set of allowed transitions between states does not 
change. We refer to this minimal number of timesteps
as the {time(step) cost of $f$ for $\auxstates$ hidden states}.

In this paper we derive a simple expression for the tradeoff between the space cost and the
time cost of any function $f$, a tradeoff
which depends on certain algebraic properties of $f$. 

We also analyze a generalization of our framework which encompasses scenarios
in which visible states are taken to be coarse-grained ``logical'' macrostates which carry out the desired input-output map, while the hidden states are a subset of the microstates over which the actual master equation unfolds. We show that all of our results regarding space and time costs still apply in this more 
general setting.

Interestingly, in standard treatments of the thermodynamics of computation, invertible functions can be carried out for free (i.e., while generating no heat), whereas many-to-one maps are viewed as costly. Moreover, noisy (i.e., non-single-valued) stochastic matrices can have lower thermodynamic cost than invertible single-valued ones, in the sense that the minimal free energy required to implement them can actually be negative~\cite{maroney2009generalizing,wolpert2015extending,wolpert_baez_entropy.2016}. In contrast, when considering the number of hidden states
required to implement a computation, it is many-to-one maps that are free, and single-valued invertible ones that are costly.  Furthermore, as shown in our companion paper~\cite{qepaper},
noisy maps may require more hidden states to implement than single-valued ones.
Thus, the relative benefits of many-to-one, invertible, and noisy maps are exactly opposite
when considering thermodynamic costs versus space and time costs.

The results derived in this paper are independent of considerations 
like whether detailed balance holds, how many thermal reservoirs the system is connected to, the amount of entropy production incurred by the stochastic process, etc. 
Nonetheless, in Supplementary Note 2, we show by construction that one can implement any $f$ using the minimal number of hidden states and timesteps using a master equation that (1) obeys detailed balance, (2) evolves probabilities in a continuous manner, and (3) has vanishing entropy
production, i.e., is thermodynamically reversible. The latter two properties are satisfied when the equilibrium distribution of the master equation at $t=0$ (determined by the choice of $q$ in the construction in Supplementary Note 2) coincides with the initial distribution over states (this and related issues are studied further in \cite{diss_wrong_prior_published_2017}).

This demonstrates that the implementation costs we consider are novel, and independent from energetic costs like heat and work that are traditionally studied in thermodynamics of computation. 
Indeed, while our analysis is driven by physical motivations,
it applies broadly to any field in which master equation dynamics play an important role.  


Our analysis suggests several important directions for future work:
\begin{enumerate}

\item Here,
we focused on tradeoffs involved in implementing single-valued functions, but typical real-world digital devices
cannot achieve perfect accuracy --- they will always have {some} noise. An important line for future work is to extend
our analysis to the space and timesteps tradeoffs for the case of arbitrary $P$, including non-single-valued maps. 
Some preliminary work related to this issue is presented in~\cite{qepaper}, where we 
present bounds (not exact equalities) on the space cost of arbitrary stochastic matrices. 
As discussed there, those space cost bounds
have some implications for bounds (again, inexact) on the time cost.

An associated goal is to
analyze the tradeoffs for implementing a given $f$ up to some accuracy $\epsilon$.
In this setting, a quantity of fundamental interest may be the maximal size $E_\text{max}$ of allowed energy barriers, which will determine how small entries of the rate matrix can be made. 
%
In particular, it is of interest to investigate the coupled tradeoffs  
between space cost, time cost (appropriately generalized), 
$\epsilon$, and $E_\text{max}$, and show how these 
reduce to a two-way tradeoff between space cost and time cost in the appropriate limit. (The analysis done here corresponds to the case where $\epsilon=0$ and $E_\text{max} = \infty$.)

\item Our results quantify the space-time tradeoff under the ``best-case'' scenario, where there are no restrictions on the dynamical processes available to an engineer who is constructing a system to carry out some map.  In particular, we assume that a system's dynamics can be sufficiently finely controlled so as to produce any desired idempotent function.   In real world situations, however, it is likely that the set of idempotent functions that can be 
engineered into a system
will be a tiny fraction of the total number possible, 
$\sum_{i=1}^{|\X|} \binom{|\X|}{i} i^{|\X|-i}$~\cite{harris1967number}. (This 
already exceeds a trillion if there are just 4 bits, so that $|\X| = 16$.) We perform an initial exploration of the consequences of such restrictions in Supplementary Note 6, but there is significant scope for future study of related tradeoffs.

\item Future work will also involve extending our framework to evaluate space and timestep tradeoffs for functions over infinite state spaces, in particular, to extend our results to Turing machines. (See \cite{qepaper} for preliminary analysis of the space costs of implementing noisy stochastic matrices over countably infinite spaces.)

\end{enumerate}


\section*{Methods}

Our proofs are fully constructive. At a high level, the construction 
can be summarized as follows:

\begin{enumerate}[(1)]

\item Adapting an existing result in semigroup theory~\cite{saito_products_1989}, we find the minimal (length) sequence of idempotent functions on a state space $\Y$ ($|\Y| = |\X| +k$) whose composition equals $f$ when restricted to $\X \subseteq \Y$.

\item We show that any idempotent function is one-step, by explicitly specifying (see  Supplementary Note 2) rate matrices and a limiting procedure for limit-embedding any idempotent function.  Thus, the length of the minimal sequence of idempotent functions whose composition implements $f$ with $\auxstates$ hidden states, as found in step (1), is an upper bound on $\Ctime{f}$. 

\item We show that if a CTMC implements $f$ with $\auxstates$ hidden states and $\auxsteps$ timesteps, then there must exist $\auxsteps$ idempotent functions whose composition implements $f$ with $\auxstates$ hidden states. Together with step (1) and (2), this means that $\Ctime{f}$ is exactly equal to the minimal number of idempotents whose composition implements $f$ with $\auxstates$ hidden states. 

\item Therefore, by chaining together the CTMCs implementing the idempotent functions in the decomposition we found in step (1), we construct a CTMC that implements $f$ while achieving our space and timestep bounds.

\end{enumerate}
\noindent
The rest of this section presents the details.

\subsection*{Time cost and idempotent functions}
\label{sec:methods}

Although our definitions apply to any stochastic matrix $\map$, our results all concern
$0/1$-valued stochastic matrices representing single-valued functions $f$. This is because 
there is a special relationship between one-step matrices that represent single-valued
functions and idempotent
functions, a relationship that in turn allows us to apply a result from semigroup theory to calculate time cost --- but only of single-valued functions.

We begin with the following, which is proved in Supplementary Note 2.

\begin{restatable}{theorem}{idempotentisstable}
Any idempotent function over a finite $\X$ is one-step.
\label{lemma:mto_implementable}
\end{restatable}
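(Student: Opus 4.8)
The plan is to build, for any idempotent $f:\X\to\X$, an explicit time-homogeneous family of rate matrices whose associated CTMCs satisfy the three conditions of \cref{def:singletimestep} in the limit. The structural fact I would record first is that idempotency forces $\img(f)=\fix(f)$: if $y=f(x)$ then $f(y)=f(f(x))=f(x)=y$. Hence every state not fixed by $f$ lies outside $\img(f)$, so in the dynamics we construct it will be a pure ``source'' (receiving no transitions) that maps in a single hop to a fixed point, which will be a pure ``sink''.

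Next I would define the generator $L$ on $\X$ by setting $L_{f(x),x}=1$ and $L_{x,x}=-1$ for each $x\notin\fix(f)$, with all remaining entries zero. This is a legitimate rate matrix: off-diagonal entries are nonnegative and every column sums to zero. Because the target $f(x)$ of each nonzero transition is a fixed point, and no non-fixed state receives any transition, the matrix exponential $e^{tL}$ decouples into independent single-site decays, so that $(e^{tL})_{x,x}=e^{-t}$ and $(e^{tL})_{f(x),x}=1-e^{-t}$ for $x\notin\fix(f)$, while $(e^{tL})_{y,y}=1$ for $y\in\fix(f)$ and all other entries vanish. I would then set $T^{(n)}(t,t'):=e^{(t'-t)\,nL}$, which for each finite $n$ is a CTMC with finite rates, and note that $T^{(n)}(0,1)=e^{nL}\to\map$ as $n\to\infty$, where $\map$ is the $0/1$ matrix of $f$; this already shows $f$ is limit-embeddable.

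Finally I would identify the limiting object $T(t,t'):=\lim_n T^{(n)}(t,t')$, which equals $\map$ whenever $t<t'$ and equals the identity when $t=t'$ (consistency here reflects $\map^2=\map$, i.e.\ idempotency, but is in any case automatic since each $T^{(n)}$ is a CTMC). It then remains to verify the conditions of \cref{def:singletimestep}: the limit exists for all $t,t'\in[0,1]$; $T(0,t)=\map$ is constant, hence continuous, on $(0,1]$ and $T(t',1)=\map$ is constant, hence continuous, on $[0,1)$, the only discontinuity being the jump from the identity to $\map$ at the endpoint $t=0$ (respectively $t'=1$), which is precisely what the semi-open continuity windows are designed to allow; and since $T_{ij}(0,t)$ is the constant $\map_{ij}\in\{0,1\}$ throughout $(0,1)$, the fixed-support condition holds trivially.

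I do not expect a substantive obstacle: the construction is elementary and the spectral structure of $L$ is transparent. The only point genuinely needing care is checking that the lone discontinuity of the limiting CTMC sits exactly at a boundary of $[0,1]$, so that the semi-open intervals in the continuity condition of \cref{def:singletimestep} are honored; one should also confirm that $nL$ is a valid generator for every $n$ and that the off-to-fixed-point, no-inflow picture really does make $e^{tL}$ block-trivial, so that the entrywise limits are as claimed.
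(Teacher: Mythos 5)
Your construction is correct, and it verifies all three conditions of the one-step definition, but it takes a genuinely different route from the paper's. You use a \emph{time-homogeneous} generator that sends each non-fixed state directly to its image (the key algebraic fact being $L^2=-L$, so $e^{sL}=I+(1-e^{-s})L$), and in the $n\to\infty$ limit the transition matrix $T(0,t)$ equals $\map$ for every $t>0$ --- an instantaneous quench whose only discontinuity sits at the boundary $t=0$, exactly where the semi-open continuity windows permit it. The paper instead builds a \emph{time-inhomogeneous} rate matrix with all-to-all coupling inside each block $f^{-1}(f(i))$, tuned so that the limiting $T_{ij}(0,t)$ equals a target weight $w_i(t)$ that interpolates linearly from a reference distribution $\tilde q$ to the delta on the fixed point. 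Both constructions prove the theorem; what the paper's extra machinery buys is the additional physical properties invoked later (Supplementary Note 2 and the Discussion): when $q$ equals the initial distribution, the probability vector $p(t)$ evolves continuously on all of $[0,1]$, the process obeys detailed balance, and the integrated entropy production vanishes in the limit. Your quench construction sacrifices those properties ($p(t)$ jumps at $t=0^+$ and the protocol is maximally dissipative), but it is shorter, its spectral structure is transparent, and it correctly isolates the one structural fact that both arguments need, namely that idempotency forces $\img(f)=\fix(f)$ so that targets of transitions are absorbing. One small remark: your observation that a single time-homogeneous CTMC suffices here is consistent with the paper's comment that time-homogeneous master equations can only realize functions implementable in one timestep.
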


\cref{lemma:mto_implementable}  means that we can get an upper bound on the time cost of
a single-valued matrix $\map$ over a finite $\Y$ by finding the minimal number
of idempotent functions that equals $\map$. 
It turns out that this bound is tight, as proved in Supplementary Note 4:

\begin{restatable}{lemma}{towershanoi}
Suppose the stochastic matrix $\map$ over $\Y \supseteq \X$ has time cost $\auxsteps$ and
the restriction of $\map$ to $\X$ is a function $f : \X \to \X$.
Then there is a product of $\auxsteps$ idempotent functions over $\X$ whose restriction to $\X$ equals $f$.
\label{lem:towers_hanoi}
\end{restatable}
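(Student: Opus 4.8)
The plan is to convert each one-step matrix in an optimal length-$\auxsteps$ decomposition of $\map$ into an idempotent function, arranged so that the composition of these idempotents still agrees with $f$ on $\X$. By definition of time cost, $\map = M_{\auxsteps}\cdots M_1$ for one-step matrices $M_1,\dots,M_{\auxsteps}$ on $\Y$. Since $\map_{[\X]}=f$ is $0/1$-valued with unit column sums and $\map$ is stochastic, $\map(\cdot\mid x)$ can place no weight outside $\X$; hence it is the point mass on $f(x)$, so $\supp\map(\cdot\mid x)=\{f(x)\}$ for every $x\in\X$. It therefore suffices to exhibit idempotent functions $g_1,\dots,g_{\auxsteps}:\Y\to\Y$ such that $g_i(y)\in\supp M_i(\cdot\mid y)$ for all $i$ and all $y$; call such a $g_i$ an \emph{idempotent section} of $M_i$. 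Granting this, note that for any stochastic matrix $M$ and any distribution $\mu$ with $\mu(z)>0$ one has $\supp(M\mu)\supseteq\supp M(\cdot\mid z)$, so a short induction on $i$ shows $g_i\circ\cdots\circ g_1(x)\in\supp\!\big((M_i\cdots M_1)(\cdot\mid x)\big)$ for all $x\in\X$. Taking $i=\auxsteps$ gives $g_{\auxsteps}\circ\cdots\circ g_1(x)\in\supp\map(\cdot\mid x)=\{f(x)\}$, so this composition of $\auxsteps$ idempotent functions restricts to $f$ on $\X$, as required.

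This reduces everything to a claim about a single one-step matrix, which I would state and prove separately: \emph{every one-step matrix $M$ on a finite set $\Y$ admits an idempotent section.} To construct one it is enough to find a set $Z\subseteq\Y$ such that (i) $M_{zz}>0$ for every $z\in Z$, and (ii) $\supp M(\cdot\mid y)\cap Z\neq\emptyset$ for every $y\in\Y$: then $g:=\mathrm{id}$ on $Z$ together with any choice $g(y)\in\supp M(\cdot\mid y)\cap Z$ for $y\notin Z$ is idempotent, because its image lies in $Z$, where $g$ is the identity. Choosing $Z=\{z:M_{zz}>0\}$ makes (i) automatic, so the only thing to prove is (ii): \emph{from every state, $M$ assigns positive probability to some state that has a self-loop under $M$.}

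To prove (ii) I would unpack the definition of one-step. Let $T$ be the limiting CTMC with $M=T(0,1)$ (after rescaling the timestep to $[0,1]$); by the constant-adjacency and continuity conditions of \cref{def:singletimestep}, the allowed transitions of $T$ form a fixed directed graph $G$ on the open interval $(0,1)$ and $T(0,\cdot)$ is continuous on $(0,1]$. A first observation is that $M_{zy}>0$ implies $(y,z)\in G$ --- otherwise $T_{zy}(0,t)=0$ throughout $(0,1)$, hence $M_{zy}=\lim_{t\uparrow 1}T_{zy}(0,t)=0$ --- so $\supp M(\cdot\mid y)$ is contained in the set of $G$-successors of $y$. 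Since $G$ has no sink vertex (a vertex with no out-edge would make a column of the stochastic matrix $T(0,t)$ vanish), every vertex reaches a terminal strongly connected component of $G$; I call the vertices of such components \emph{recurrent}. It then remains to prove (a) every recurrent $z$ still satisfies $M_{zz}>0$, and (b) for every $y$, $M$ places positive weight on some recurrent state reachable from $y$; given (a) and (b), for each $y$ such a recurrent state lies in $\supp M(\cdot\mid y)\cap Z$, which is (ii). Both (a) and (b) express that probability cannot ``evaporate'' off a state that keeps a self-loop right up to time $1$, and I would derive them by a Chapman--Kolmogorov bootstrapping argument near the endpoint, decomposing $T(0,1)$ and $T(s,1)$ through intermediate times $0<s<t<1$ and using that recurrent states retain positive self-transition probability on sub-intervals of $(0,1)$.

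The delicate step is (a)/(b). The definition of one-step imposes continuity of $T$ only on the half-open intervals $(0,1]$ and $[0,1)$, deliberately permitting the set of allowed transitions to collapse at the endpoints, so the argument has to control $T$ exactly at $t=1$ and show that no probability mass sitting on a persistently self-looping state is lost in that final collapse. Everything else --- the reduction to idempotent sections, the support induction in the first paragraph, and the purely graph-theoretic facts about $G$ --- is routine.
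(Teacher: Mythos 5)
Your overall architecture is the same as the paper's: replace each one-step factor by an idempotent function supported inside it (your ``idempotent section'' is exactly a $0/1$-valued matrix $\tilde{L}$ with $\adj[\tilde{L}]\subseteq\adj[L]$), push the containment of supports through the product by induction, and conclude from the single-valuedness of $\map$ on $\X$. The support induction and the ``$g=\mathrm{id}$ on $Z$, map everything else into $Z$'' construction are both correct and match what the paper does in Supplementary Note 4.

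The gap is in the one claim you defer, and it is not merely delicate --- your claim (a) is false as stated. Take bit erasure, $M=\left( \begin{smallmatrix} 1&1\\ 0&0 \end{smallmatrix} \right)$, implemented as in Supplementary Note 1 by $T(0,t)=\left( \begin{smallmatrix} 1-(1-t)q&1-(1-t)q\\ (1-t)q&(1-t)q \end{smallmatrix} \right)$ with $0<q<1$. For every $t\in(0,1)$ all four entries are strictly positive, so your graph $G$ of allowed transitions on $(0,1)$ is the complete digraph on two vertices; its unique strongly connected component is terminal and contains both states, so both are ``recurrent'' in your sense, yet $M_{11}=0$. Probability sitting on a state that keeps a self-loop throughout $(0,1)$ \emph{can} be lost in the collapse at $t=1$ --- that is precisely what erasure does to state $1$ --- so the terminal-SCC structure of the intermediate-time graph $G$ is the wrong object to control. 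The paper instead proves that $M=T(0,1)$ itself is transitive ($M_{ji}>0$ and $M_{kj}>0$ imply $M_{ki}>0$), via Chapman--Kolmogorov through an intermediate time $t'$: continuity of $T(\cdot,1)$ gives $T_{kj}(t',1)>0$, the constant-adjacency condition together with continuity gives $T_{ji}(0,t')>0$, and then $T_{ki}(0,1)\ge T_{kj}(t',1)\,T_{ji}(0,t')>0$. Applying transitivity to the graph of $\adj[M]$ (not of $G$) forces every node lying on a directed cycle of $\adj[M]$ to have a self-loop, and every node to have a direct edge to such a node --- which is exactly your condition (ii). If you rerun your graph-theoretic step on $\adj[M]$ after first establishing transitivity of $M$, your proof goes through.
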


\noindent By combining these results, we simplify the calculation of the time cost of a function $f$ to the problem of finding a minimal set of idempotent functions whose product is $f$:

\begin{restatable}{corollary}{centralresult}
The time cost of any function $f$ with $\auxstates$ hidden states is the minimal
number of idempotents over $\Y = \X \cup \{1, \ldots, \auxstates\}$ such that
the product of those idempotents equals $f$ when restricted to $\X$. 
\label{thm:central_result}
\end{restatable}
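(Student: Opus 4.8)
The plan is to prove the stated equality by bounding $\Ctime{f}$ on both sides by $m$, where $m$ denotes the minimal number of idempotent functions over $\Y=\X\cup\{1,\dots,\auxstates\}$ whose product, restricted to $\X$, equals $f$. Both bounds follow directly from the two results just cited, \cref{lemma:mto_implementable} and \cref{lem:towers_hanoi}, so the content of the proof is essentially just to line up the definitions.

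For $\Ctime{f}\le m$: fix a minimal sequence $g_1,\dots,g_m$ of idempotent functions over $\Y$ with $(g_m\circ\cdots\circ g_1)_{[\X]}=f$. By \cref{lemma:mto_implementable}, each $g_i$, viewed as a $0/1$-valued stochastic matrix of dimension $(|\X|+\auxstates)\times(|\X|+\auxstates)$, is one-step. Hence $g_m\cdots g_1$ is a product of $m$ one-step matrices of that dimension whose restriction to $\X$ is $\map$; that is, it implements $\map$ with $\auxstates$ hidden states. By the definition of time cost, $\Ctime{f}\le m$.

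For $\Ctime{f}\ge m$: put $\auxsteps:=\Ctime{f}$ and choose one-step matrices $M^{(1)},\dots,M^{(\auxsteps)}$ over $\Y$ whose product $M:=M^{(\auxsteps)}\cdots M^{(1)}$ implements $f$ with $\auxstates$ hidden states, so $M_{[\X]}=f$. Since $M$ is exhibited as a product of $\auxsteps$ one-step matrices over $\Y$, the minimal number $\auxsteps^{\ast}$ of one-step matrices over $\Y$ whose product equals $M$ satisfies $\auxsteps^{\ast}\le\auxsteps$. Applying \cref{lem:towers_hanoi} to $M$ — with the lemma's ambient space identified with $\Y$ and its visible subset with $\X$ — yields a product of $\auxsteps^{\ast}$ idempotent functions over $\Y$ whose restriction to $\X$ equals $f$. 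By minimality of $m$, $m\le\auxsteps^{\ast}\le\auxsteps=\Ctime{f}$. Combining the two bounds gives $\Ctime{f}=m$, proving the corollary.

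I do not expect a substantive obstacle: this corollary is designed as a repackaging of \cref{lemma:mto_implementable} and \cref{lem:towers_hanoi}. The only points that call for care rather than a one-line citation are bookkeeping ones — verifying that the intermediate product matrix $M$ over $\Y$ has time cost at most $\auxsteps$ so that \cref{lem:towers_hanoi} applies, and confirming that \cref{lem:towers_hanoi} is invoked with its ``$\X$'' and ``$\Y$'' correctly identified with the visible set and the augmented set $\X\cup\{1,\dots,\auxstates\}$ appearing in the corollary (and, relatedly, that ``idempotent function over $\Y$'' is the intended reading of that lemma).
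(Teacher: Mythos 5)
Your proposal is correct and follows exactly the route the paper intends: the corollary is stated there as an immediate combination of \cref{lemma:mto_implementable} (each idempotent is one-step, giving the upper bound) and \cref{lem:towers_hanoi} (any $\auxsteps$-step implementation yields $\auxsteps$ idempotents, giving the lower bound), which is precisely your two-inequality argument. Your bookkeeping remarks (the $\auxsteps^{\ast}\le\auxsteps$ step, and reading the lemma's ``idempotent functions over $\X$'' as functions over $\Y$, consistent with its supplementary proof) are the right points of care but raise no substantive issue.
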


%

Idempotent functions have been extensively studied in semigroup theory~\cite{erdos_products_1967,howie_subsemigroup_1966,howie_gravity_1979,howie_products_1984,saito_products_1989}. 
\cref{thm:central_result} allows us to exploit results from those studies to calculate the time cost (to within 1) for any function. In particular, we will use the following \namecref{thm:saito}, proved in~\cite{saito_products_1989} in an analysis of different issues:

\begin{theorem}
Let $f: \X \to \X$ be non-invertible. Then
\begin{equation}
\Ctime[0]{f} = \left\lceil\frac{|\X| + \cycl (f) - \fix (f)}{|\X| - |\img (f) |}\right\rceil + \extrabit[0] \,.
\end{equation}
where $\extrabit[0]$ equals either zero or one.
\label{thm:saito}
\end{theorem}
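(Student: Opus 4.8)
The plan is to derive \cref{thm:saito} from \cref{thm:central_result}: taking $\auxstates=0$ there, $\Ctime[0]{f}$ equals the least number $\auxsteps$ of idempotent self-maps $e_1,\dots,e_{\auxsteps}$ of $\X$ with $f = e_{\auxsteps}\circ\cdots\circ e_1$, which is exactly the quantity studied in \cite{saito_products_1989}; the task is thus to reprove that combinatorial statement. Write $d := |\X| - |\img(f)|$, which is positive since $f$ is non-invertible, and associate to every $g:\X\to\X$ the complexity $N(g) := |\X| - \fix(g) + \cycl(g)$. Then $N(\mathrm{id})=0$, and for an idempotent $e$ of defect $\delta(e):=|\X|-|\img(e)|$ one has $N(e)=\delta(e)$, because $\img(e)$ is precisely the fixed-point set of $e$ and $e$ has no orbit of length $\ge 2$. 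The claim to prove becomes $\lceil N(f)/d\rceil \le \auxsteps \le \lceil N(f)/d\rceil + 1$, with $N(f) = |\X| + \cycl(f) - \fix(f)$.

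For the lower bound I would first note that in any factorization $f = e_{\auxsteps}\circ\cdots\circ e_1$ every factor has defect at most $d$: the image of a composite is contained in the image of each factor applied along the way, so $|\img(e_i)| \ge |\img(f)|$. The core is a sub-additivity lemma: for any idempotent $e$ and any $g$, $N(e\circ g) \le N(g) + \delta(e)$. I would prove this by tracking the retraction $R := \img(e)$ and the $\delta(e)$ points of $\X\setminus R$ that $e$ moves: a fixed point of $g$ can be destroyed by composing with $e$ only at a point outside $R$, any cycle of length $\ge 2$ in $e\circ g$ must live inside $\img(e\circ g)\subseteq R$, and a case analysis shows that each newly created cyclic orbit and each destroyed fixed point ``consumes'' a distinct one of the $\delta(e)$ moved points (the delicate part is ruling out that one moved point does both). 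Granting the lemma, peeling off the outermost factor and inducting gives $N(f) \le \sum_{i=1}^{\auxsteps}\delta(e_i) \le \auxsteps\, d$, hence $\auxsteps \ge \lceil N(f)/d\rceil$.

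For the upper bound I would exhibit a factorization of length $\lceil N(f)/d\rceil + 1$. Decompose the functional graph of $f$ into its cyclic-orbit components (isolated cycles of length $\ge 2$) and the remaining rho-shaped components (trees feeding into fixed points or into non-isolated cycles), and build $f$ one idempotent per round: in each round, use the collapse budget $d$ to retract up to $d$ not-yet-placed vertices onto vertices that already sit at their final $f$-images, while simultaneously advancing each cyclic orbit by one notch, using the room freed by those collapses as scratch space (a $c$-cycle is rotated into place over $c$ consecutive rounds). A bookkeeping argument shows that the total ``work'' to be done is one unit per non-fixed point plus one extra unit per cyclic orbit, i.e. $N(f)$ units at rate $d$ per round, so $\lceil N(f)/d\rceil$ rounds suffice up to one boundary round; this yields $\auxsteps \le \lceil N(f)/d\rceil + \extrabit[0]$ with $\extrabit[0]\in\{0,1\}$, and combining with the lower bound proves the theorem.

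I expect two things to be the main obstacles. The first, which is genuinely open (and is exactly what \cite{saito_products_1989} leaves unresolved), is determining $\extrabit[0]$ precisely: the sub-additivity argument only delivers the clean bound $\lceil N(f)/d\rceil$, while the construction occasionally seems to waste one round packing the cyclic-orbit rotations into the last, partially used round, and closing this $\{0,1\}$ gap for all $f$ is hard. The second is purely technical but still where the real effort lies --- the case analysis in the sub-additivity lemma, where one must show that creating new cyclic orbits and destroying fixed points cannot be done simultaneously ``for free,'' which requires carefully controlling how $R$ meets the image and the orbit structure of $g$.
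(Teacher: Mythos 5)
The first thing to note is that the paper does not prove this theorem at all: after reducing $\Ctime[0]{f}$ to the minimal length of an idempotent factorization of $f$ (via \cref{thm:central_result}, exactly as you do), it imports the combinatorial statement wholesale from Saito's 1989 work on products of idempotents in finite full transformation semigroups. So your reduction step coincides with the paper's, and everything after it is an attempt to reprove a cited result from scratch. Your reconstruction does follow the actual architecture of the Howie--Saito argument: your $N(g)$ is precisely Howie's ``gravity'' $|\X|+\cycl(g)-\fix(g)$; the observation that every factor in a factorization of $f$ has defect at most $d=|\X|-|\img(f)|$ is the standard one (since $|\img(e_\ell\circ\cdots\circ e_1)|\le|\img(e_i)|$ for each $i$); $N(e)=\delta(e)$ for idempotents is correct because $\img(e)=\fix(e)$ and idempotents have no cyclic orbits; and the lower bound via subadditivity of gravity plus an explicit factorization for the upper bound is how the bounds $\lceil N(f)/d\rceil$ and $\lceil N(f)/d\rceil+1$ are obtained in that literature. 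Leaving the residual $\{0,1\}$ ambiguity as $\extrabit[0]$ is also faithful to the paper, which states explicitly that this quantity is hard to compute.

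That said, as a proof your proposal is an outline with both load-bearing steps missing. The subadditivity lemma $N(e\circ g)\le N(g)+\delta(e)$ is the entire content of the lower bound, and the part you flag as ``delicate'' --- showing that the $\delta(e)$ moved points cannot simultaneously pay for destroyed fixed points and created cyclic orbits --- is genuinely the hard part once $\delta(e)>1$. (For defect one it is short: if $e$ moves only $a\mapsto b$, a destroyed fixed point forces $g(a)=a$, while any newly created cyclic orbit must contain the unique $(e\circ g)$-preimage of $b$, which then equals $a$; but $a\notin\img(e\circ g)$ cannot lie in a cyclic orbit, so the two events are mutually exclusive, and one checks separately that at most one of each can occur. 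For general defect the moved points interact and the bookkeeping must actually be written out.) Likewise the upper bound rests on ``a bookkeeping argument shows'': verifying that rotating each $c$-cycle into place over $c$ rounds, using collapsed states as scratch space, really costs exactly one unit per non-fixed point plus one per cyclic orbit at rate $d$ per round is precisely where the possible extra $+1$ enters, and it is not carried out. Neither gap reflects a wrong idea, but until they are filled you have a plan for reproving Saito's theorem rather than a proof; for the purposes of this paper, the citation is the proof.
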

\noindent The expression for 
$\extrabit[0]$ is not easy to calculate, though some sufficient conditions for $\extrabit[0]=0$ are known~\cite{saito_products_1989}.

\subsection*{Proofs of our main results}
\label{subsec:main-proofs}

\setcounter{theorem}{0}
\begin{theorem}
Let $f : \X \to \X$. 
For any number of hidden states $\auxstates > 0$, the time cost is
\begin{multline}
\Ctime{f}  = \\
\left\lceil\frac{\auxstates + |\X| + \max \! \big[\!\cycl (f) - \auxstates, 0\big] - \fix (f) }{\auxstates + |\X| - |\img (f) |}\right\rceil + \extrabit
\end{multline}
where $\extrabit$ equals either zero or one.
\label{thm:auxstates}
\end{theorem}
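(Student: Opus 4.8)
The plan is to reduce the computation of $\Ctime{f}$ to a finite optimization over functions on the enlarged state space $\Y := \X \cup \{1,\dots,\auxstates\}$, and then to solve that optimization using \cref{thm:saito}. By \cref{thm:central_result}, $\Ctime{f}$ is the minimal number of idempotents over $\Y$ whose composition restricts to $f$ on $\X$. Every such composition is a function $g:\Y\to\Y$ with $g|_\X=f$ that is non-invertible or the identity, and conversely the minimal length of an idempotent factorization of a non-invertible $g:\Y\to\Y$ is exactly $\Ctime[0]{g}$, which \cref{thm:saito} evaluates; a non-identity bijection has no such factorization. Hence $\Ctime{f} = \min_{g}\Ctime[0]{g}$, the minimum being over the non-invertible (or identity) extensions of $f$ to $\Y$, and it is over a nonempty set because $\auxstates>0$ lets one send a hidden state into $\img(f)$ to obtain a non-invertible extension.

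Next I would parametrize such a $g$ by its action on the hidden states $H:=\{1,\dots,\auxstates\}$, the only freedom since $g|_\X=f$ is fixed. Let $a$ be the number of hidden states fixed by $g$ and $j:=\auxstates-a$ the number of non-fixed ones. The structural estimates I need are $\fix(g)=\fix(f)+a$, $|\img(g)|\ge|\img(f)|+a$, and $\cycl(g)\ge\max\!\big(\cycl(f)-j,\,0\big)$. The last of these is the heart of the argument: a cyclic orbit of $f$ remains a cyclic orbit of $g$ unless some hidden state maps onto one of its elements and thereby destroys its ``unique preimage'' property, and each of the $j$ non-fixed hidden states has a single image, so can spoil at most one orbit; moreover no hidden state can lie on a \emph{new} cyclic orbit of $g$, since the forward $g$-orbit of a point of $H$ enters $\X$ and then never leaves it. I would then exhibit the explicit extension $g$ that sends $\min(j,\cycl(f))$ of its non-fixed hidden states into distinct cyclic orbits of $f$ and the remaining non-fixed ones anywhere in $\img(f)$, and check that it attains equality in all three estimates simultaneously. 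Substituting $a=\auxstates-j$ into \cref{thm:saito} then gives $\Ctime[0]{g} = \lceil N(j)/D(j)\rceil + \extrabit[0][g]$, where
\[
N(j) := |\X| + j + \max\!\big(\cycl(f)-j,\,0\big) - \fix(f), \qquad D(j) := |\X| - |\img(f)| + j .
\]

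It then remains to minimize over $j\in\{0,\dots,\auxstates\}$ (for invertible $f$ one must take $j\ge1$, consistent with \cref{coroll:simple}). I would show $N(j)/D(j)$ is non-increasing in $j$: for $j\le\cycl(f)$ the numerator equals the constant $|\X|+\cycl(f)-\fix(f)$ while $D(j)$ increases; for $j\ge\cycl(f)$ both $N$ and $D$ grow by one per unit step in $j$ and $N(j)-D(j)=|\img(f)|-\fix(f)\ge0$, so the ratio, being $\ge1$, cannot increase; the two regimes agree at $j=\cycl(f)$. Hence $\lceil N(j)/D(j)\rceil$ is minimized at $j=\auxstates$, where $N(\auxstates)$ and $D(\auxstates)$ are precisely the numerator and denominator in the theorem. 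Since $\Ctime[0]{g}$ exceeds $\lceil N(g)/D(g)\rceil$ by an amount in $\{0,1\}$ for every admissible $g$,
\[
\left\lceil\tfrac{N(\auxstates)}{D(\auxstates)}\right\rceil \;\le\; \Ctime{f} \;=\; \min_g\Ctime[0]{g} \;\le\; \left\lceil\tfrac{N(\auxstates)}{D(\auxstates)}\right\rceil + 1,
\]
which is the claimed identity with $\extrabit\in\{0,1\}$. I expect the real work to lie in the structural estimates --- above all the bound on $\cycl(g)$ and the verification that equality in all three can be achieved at once, i.e., that no more clever deployment of the hidden states helps --- rather than in the monotonicity computation or the bookkeeping of the $\extrabit$ term.
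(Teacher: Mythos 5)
Your proposal is correct and follows essentially the same route as the paper's proof: reduce to minimal idempotent factorizations of non-invertible extensions $g$ of $f$ to $\Y$ via \cref{thm:central_result}, apply \cref{thm:saito}, parametrize $g$ by how many hidden states are non-fixed and argue they are best spent destroying cyclic orbits of $f$, then optimize. Your version is if anything slightly more careful than the paper's, since you make explicit the monotonicity of $N(j)/D(j)$ in $j$ and the sandwich argument absorbing the $g$-dependent $\extrabit[0][g]$ term, both of which the paper asserts without detail.
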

\setcounter{theorem}{7}

\begin{proof}
	Let $\Y = \X \cup Z$ where $\Z \cap \X = \varnothing$ and $|\Z| = \auxstates$. By definition $\Ctime{f}$ is the minimum of $\Ctime[0]{g}$ over all non-invertible functions $g: \Y \to \Y$ that equal $f$ when restricted to $\X$. Moreover, by  \cref{thm:saito},
	\begin{equation}
	\Ctime[0]{g} = \left\lceil\frac{|\X| + \cycl (g) - \fix (g)}{|\X| - |\img (g) |}\right\rceil + \extrabit[0][g]
	\end{equation}

	Due to the constraint that $g(x) = f(x)$ for all $x \in \X$, our
	problem is to determine the optimal behavior of $g$ over $\Z$. For any fixed $|\img(g)|$, 
	this means finding the $g$ that minimizes $\cycl(g) -\fix(g)$. Since $\img (f) \subseteq \X$, the
	constraint tells us that there are no cyclic orbits of $g$ that include both elements of $\X$ and elements of $\Z$. 
	So all cyclic orbits of $g$ either stay wholly within $\Z$ or wholly within $\X$. 
	Moreover changing $g$ so that all elements of a cyclic orbit $\Omega$ lying wholly in 
	$\Z$ become fixed points of $g$ does not violate the constraint and reduces the time cost.
	Therefore under the optimal $g$, all $z \in \Z$ must either
	be fixed points or get mapped into $f(\X)$. 
	
	Our problem then reduces to determining precisely where $g$ should map those elements it sends 
	into $f(\X)$. To determine this, note that $g$ might map an element of $\Z$ into an $x$ that lies in a cyclic orbit of $f$, $\Omega$. If
	that happens, $\Omega$ will not be a cyclic orbit of $g$ --- and so the time cost will be reduced.
	Thus, to ensure that $\cycl(g)$ is minimal, we can assume that
	all elements of $\Z$ that are not fixed points of $g$ get mapped into $\img(f)$, with as many as possible
	being mapped into cyclic orbits of $f$.
	
	Suppose $g$ sends $m \leq \auxstates$ of the hidden states into the image of $f$, where each can be used to ``destroy'' a cyclic orbit of $f$ (until there are none left, if possible). 
	The remaining $\auxstates-m$ hidden states are fixed points of $g$. 
	Moreover, since $g(\X) = \img(f)$, 
	\begin{equation}|\img(g)| = |\img(f)| + \auxstates - m \,.\end{equation}
	So using \cref{thm:saito}, 
	\begin{multline}
	\Ctime[0]{g} = \\
	 \left\lceil\frac{m + |\X| + \max \! \big[ \! \cycl (f) - m, 0 \big]- \fix(f)}{m + |\X| - |\img(f) |}
	\right\rceil \!+\! \extrabit[0][g].
	\end{multline}
	
	\noindent The  quantity inside the ceiling function is minimized if $m$ is as large as possible,
	which establishes the result once we take $\extrabit[\auxstates][f] := \extrabit[0][g]$ for the $g$ which has $m=k$ and smallest $\extrabit[0][g]$. 
\end{proof}


\begin{corollary}
\label{cor:approx}
For any $f$ and number of hidden states $k$, 
\begin{align}
\Ctime{f} \approx \frac{|\X| + \cycl(f) - \fix(f)}{\auxstates + |\X| - |\img(f)|} + 1 \,.
\label{eq:approx1}
\end{align}
to within 2 timesteps.
\end{corollary}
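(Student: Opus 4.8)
The plan is to derive \cref{cor:approx} directly from the exact expression in \cref{thm:auxstates} (and from \cref{thm:saito} in the case $\auxstates = 0$, for which the two formulas coincide) by a short estimate of the gap between that exact value and the right-hand side of \eqref{eq:approx1}. Write $D := \auxstates + |\X| - |\img(f)|$ for the common denominator, and note $D \ge \auxstates \ge 0$ since $|\img(f)| \le |\X|$. The only degenerate case is $D = 0$, which happens exactly when $\auxstates = 0$ and $f$ is invertible; there $\Ctime[0]{f} = \infty$ by \cref{coroll:simple} and (for non-identity $f$) the right-hand side of \eqref{eq:approx1} is also $+\infty$, so the statement is trivial, while for $f$ the identity one adopts the convention that the fraction $0/0$ is read as $0$. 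So I would dispatch $D = 0$ first and assume $D > 0$ thereafter. Let $N_{\mathrm{ex}} := \auxstates + |\X| + \max[\cycl(f) - \auxstates, 0] - \fix(f)$ denote the numerator inside the ceiling in \cref{thm:auxstates}, and $N_{\mathrm{ap}} := |\X| + \cycl(f) - \fix(f)$ the numerator in \eqref{eq:approx1}.

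The core of the argument is a case split on the sign of $\cycl(f) - \auxstates$. If $\cycl(f) \ge \auxstates$, the maximum equals $\cycl(f) - \auxstates$ and the numerators coincide exactly, $N_{\mathrm{ex}} = N_{\mathrm{ap}}$; then $\Ctime{f} = \lceil N_{\mathrm{ap}}/D \rceil + \extrabit$, so using $\lceil x \rceil - x \in [0,1)$ and $\extrabit \in \{0,1\}$ this differs from $N_{\mathrm{ap}}/D + 1$ by a quantity in $[-1,1)$. If instead $\cycl(f) < \auxstates$, the maximum is $0$, so $N_{\mathrm{ex}} - N_{\mathrm{ap}} = \auxstates - \cycl(f)$, and dividing by $D$ gives $N_{\mathrm{ex}}/D - N_{\mathrm{ap}}/D = (\auxstates - \cycl(f))/D \in [0,1]$, the upper bound following from $0 \le \cycl(f) \le \auxstates \le D$. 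Writing $\delta := (\auxstates - \cycl(f))/D$, we get $\Ctime{f} - \big(N_{\mathrm{ap}}/D + 1\big) = \big(\lceil N_{\mathrm{ex}}/D\rceil - N_{\mathrm{ex}}/D\big) + \delta + (\extrabit - 1)$, a sum of terms lying in $[0,1)$, $[0,1]$, and $\{-1,0\}$ respectively, hence in $[-1,2)$.

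Combining the two cases, the signed deviation of $\Ctime{f}$ from the right-hand side of \eqref{eq:approx1} always lies in $[-1,2)$, so in particular its absolute value is below $2$, which is the claimed accuracy. There is no substantive obstacle here: the entire proof is elementary interval arithmetic once \cref{thm:auxstates} is granted. The only points requiring care are (i) treating the degenerate denominator $D = 0$ (invertible $f$, no hidden states) and the attendant convention for the identity up front, and (ii) making sure the bound $\delta \le 1$ is justified, which is exactly where $|\img(f)| \le |\X|$ enters.
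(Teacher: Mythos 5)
Your proof is correct and takes essentially the same route as the paper's: a case split on the sign of $\cycl(f)-\auxstates$, with the numerators coinciding when $\auxstates\le\cycl(f)$ (error at most $1$ from the ceiling and $\extrabit$), and the residual term $(\auxstates-\cycl(f))/(\auxstates+|\X|-|\img(f)|)$ bounded by $1$ via $|\img(f)|\le|\X|$ otherwise. Your explicit handling of the degenerate denominator $D=0$ and of the endpoint $\delta=1$ is slightly more careful than the paper's, but the substance is identical.
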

\begin{proof}
Whenever $\auxstates \le \cycl(f)$, the approximation of  \cref{eq:approx1} holds up to accuracy of 1 timestep, since the $+1$ term accounts for error due to both the ceiling function and the term $\extrabit \in \{0,1\}$.
The equivalent approximation for $\auxstates > \cycl(f)$ is
\begin{align}
\frac{\auxstates + |\X| - \fix(f)}{\auxstates + |\X| - |\img(f)|} + 1 \,,
\label{eq:approx2}
\end{align}
and also holds up to accuracy of 1 timestep.
However, when $\auxstates > \cycl(f)$, \cref{eq:approx2} will never be more than 1 greater than \cref{eq:approx1}. To see why, note that \cref{eq:approx1} subtracted from \cref{eq:approx2} gives
\begin{align}
\frac{\auxstates - \cycl(f)}{\auxstates + |\X| - |\img(f)|}\,.
\label{eq:approxdiff}
\end{align}
For $\auxstates > \cycl(f)$, this quantity is bigger than 0.  At the same time, \cref{eq:approxdiff} is always smaller than 1, since the numerator is smaller than the denominator (observe that $|\X|-|\img(f)|\ge 0$).  
\end{proof}

\begin{corollary}
\label{cor:worstcase}
For any $f : \X \to \X$, 
\begin{align}
\Ctime{f} \le \frac{1.5 \times |\X|}{k} + 3 \,.
\end{align}
\end{corollary}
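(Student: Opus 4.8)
\emph{Proof proposal.} The plan is to start from the exact formula in \cref{thm:auxstates} and estimate the ceiling and $\extrabit$ crudely. First I would assume $\auxstates\ge 1$ (for $\auxstates=0$ the claimed bound is vacuous, since its right-hand side is to be read as $+\infty$). Write $n=|\X|$, $\phi=\fix(f)$, $i=|\img(f)|$, $c=\cycl(f)$. Using $\lceil x\rceil\le x+1$ for every real $x$ and $\extrabit\in\{0,1\}$, \cref{thm:auxstates} gives
\[
\Ctime{f}\ \le\ \frac{\auxstates+n+\max[c-\auxstates,\,0]-\phi}{\auxstates+n-i}\ +\ 2 ,
\]
and the denominator here is at least $\auxstates$ because $i\le n$. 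So it remains to bound the displayed fraction by $1+1.5\,n/\auxstates$.

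The one nonelementary ingredient is the inequality $\cycl(f)\le n/2$. I would argue that, by the definition of a cyclic orbit given just before \cref{thm:auxstates}, distinct cyclic orbits are pairwise disjoint subsets of $\X$ (each element of a cyclic orbit lies on a unique cycle of the functional graph of $f$, since such elements have unique inverses under $f$), none of them contains a fixed point of $f$, and each has at least two elements (the generator $x$ satisfies $f(x)\ne x$, so $x$ and $f(x)$ are two distinct members of its orbit). Hence $2\cycl(f)+\fix(f)\le n$, and in particular $c\le n/2$. Now observe the identity $\auxstates+\max[c-\auxstates,0]=\max\{\auxstates,c\}$, so the numerator of the fraction is $\max\{\auxstates,c\}+n-\phi\le(\auxstates+n/2)+n=\auxstates+1.5\,n$, using $c\le n/2$ and $\phi\ge0$; since the denominator is at least $\auxstates$, the fraction is at most $(\auxstates+1.5\,n)/\auxstates=1+1.5\,n/\auxstates$. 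Combining with the previous display yields $\Ctime{f}\le 1.5\,n/\auxstates+3$.

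The hard part, such as it is, is the structural bound $\cycl(f)\le|\X|/2$; everything else is bookkeeping (the $\auxstates=0$ case, folding the ceiling and $\extrabit$ into the additive ``$+2$'', and the trivial estimates $\phi\ge0$, $i\le n$, and $\max\{a,b\}\le a+b$ for nonnegative $a,b$). I expect no genuine difficulty: once the definition of a cyclic orbit is unpacked, the count is easy to bound, and the remainder is a two-line estimate. In fact the same computation gives the slightly sharper $\Ctime{f}\le1.5\,n/\auxstates+2$ whenever $\auxstates<\cycl(f)$, which is worth remarking but not needed for the stated inequality.
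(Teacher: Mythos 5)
Your proposal is correct and follows essentially the same route as the paper: the paper exhibits the extremal case of a fixed-point-free involution (for which $\cycl(f)=|\X|/2$, $\fix(f)=0$, $|\img(f)|=|\X|$ maximize the approximate formula of \cref{cor:approx}) and then invokes the within-2 accuracy of that approximation, which is exactly your bound $\cycl(f)\le|\X|/2$, $\fix(f)\ge 0$, $|\img(f)|\le|\X|$ plus the additive slack for the ceiling and $\extrabit$. If anything your direct inequality chase from \cref{thm:auxstates} is slightly more explicit than the paper's ``one can verify these values maximize'' phrasing, and your remarks on the $\auxstates=0$ case and the sharper $+2$ constant when $\auxstates<\cycl(f)$ are consistent with the paper's own observation about odd $|\X|$.
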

\begin{proof}
First, assume $|\X|$ is even and consider some function $f^*$ which has $f^*(f^*(x))=x$ and $f^*(x) \ne x$ for all $x\in \X$. 
One can verify that for this $f$, $\cycl(f)=|\X|/2$, $\fix(f) = 0$, and $|\img(f)| = |\X|$, and that these values maximize the approximation to the time cost given by \cref{cor:approx}. This approximation is accurate to within 2 timesteps, which implies the bound
\begin{align}
\Ctime{f} \le \frac{1.5 \times |\X|}{k} + 3 \,.
\end{align}
If $|\X|$ is odd, the maximum number of cyclic orbits is $(|\X|-1)/2$, so the above upper bound can be tightened by $1/(2k)$.
\end{proof}

\begin{corollary}
Let $\tau > 3$ and define
\begin{align}
\auxstates^{*} &:=  \left\lceil \frac{ \cycl(f)  - |\X|(\tau - 3) - \fix(f) }{(\tau - 2)} \right\rceil  + |\!\img(f)|  \nonumber \\
\auxstates^{**} &:=   \left\lceil \frac{  |\!\img(f)|(\tau - 2) - \fix(f) }{(\tau - 3)}  \right\rceil - |\X| \;.
\end{align}
We can implement $f$ in $\tau$ timesteps if we have at least $\auxstates$ hidden states,
where
\begin{equation} 
\auxstates = 
\begin{cases}  
\max [ \auxstates^{*}, 0] & \text{ if $\auxstates^{*} < \cycl(f)$} \\ 
\max [\auxstates^{**}, 0] & \text{ otherwise.} 
\end{cases} 
\end{equation}
\end{corollary}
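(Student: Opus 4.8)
The plan is to invert the main result, \cref{thm:auxstates}: that theorem expresses $\Ctime{f}$ as the ceiling of an explicit rational function of $\auxstates$ plus the unknown term $\extrabit \in \{0,1\}$, so the claim ``$f$ can be implemented in $\tau$ timesteps with at least $\auxstates$ hidden states'' is just the assertion that this expression is $\le \tau$ for all sufficiently large $\auxstates$. First I would record that the right-hand side of \cref{eq:mainresult} is non-increasing in $\auxstates$ --- intuitively, extra hidden states never hurt --- so that solving the inequality once, at a threshold value of $\auxstates$, automatically yields the ``at least $\auxstates$'' phrasing.

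To prove this monotonicity I would reuse the two-regime split from the proof of \cref{thm:auxstates}. When $\auxstates \le \cycl(f)$, the numerator $\auxstates + |\X| + \max[\cycl(f)-\auxstates,0] - \fix(f)$ collapses to the constant $|\X| + \cycl(f) - \fix(f)$ while the denominator $\auxstates + |\X| - |\img(f)|$ strictly increases, so the ratio and its ceiling are non-increasing. When $\auxstates > \cycl(f)$ the ratio equals $1 + \frac{|\img(f)| - \fix(f)}{\auxstates + |\X| - |\img(f)|}$, which is non-increasing because every fixed point lies in the image, so $|\img(f)| \ge \fix(f)$. The two pieces agree at $\auxstates = \cycl(f)$, giving monotonicity throughout.

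Next I would solve the inequality. Since $\extrabit$ is only known to lie in $\{0,1\}$, I would pass to a sufficient condition robust to both its value and the ceiling: require the rational expression inside the ceiling of \cref{eq:mainresult} to be at most $\tau - 2$, which guarantees $\Ctime{f} \le \tau - 1 \le \tau$. In the regime $\auxstates \le \cycl(f)$ the numerator is constant, so this is a linear inequality in $\auxstates$; isolating $\auxstates$ and applying an outer ceiling (we are lower-bounding an integer) yields exactly $\auxstates^{*}$. In the regime $\auxstates > \cycl(f)$, rewriting the ratio as $1 + \frac{|\img(f)| - \fix(f)}{\auxstates + |\X| - |\img(f)|}$ gives a linear inequality in the denominator, and solving it yields $\auxstates^{**}$. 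Which branch applies is dictated by whether the regime-1 threshold actually lies in regime 1: if $\auxstates^{*} < \cycl(f)$ the constraint is met inside the first branch and $\auxstates = \max[\auxstates^{*},0]$ works; otherwise the optimum has been pushed past $\cycl(f)$ and one uses $\auxstates = \max[\auxstates^{**},0]$. The clamp at $0$ covers functions already cheap enough to need no hidden states, and the hypothesis $\tau > 3$ makes the denominators $\tau - 2$ and $\tau - 3$ positive (so the formulas are well-defined) and ensures the target ``ceiling term $\le \tau - 2$'' is attainable in the $\auxstates \to \infty$ limit of the second branch.

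I expect the only real difficulty to be bookkeeping rather than mathematics: keeping straight the interplay between the outer ceiling in \cref{eq:mainresult}, the unknown $\extrabit$, and the two regimes, so that every rounding step goes in the safe direction and the boundary $\auxstates = \cycl(f)$ is assigned consistently to the two branches. As sanity checks I would verify that the two branches agree at that boundary, and that dropping the ceilings and the conservative slack recovers the approximation $k_{\mathrm{min}}(f,\tau) \approx \frac{\cycl(f) + |\X|(2-\tau) - \fix(f)}{\tau - 1} + |\img(f)|$ stated just before \cref{fig:tradeoff}.
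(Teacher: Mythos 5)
Your proposal is correct and follows essentially the same route as the paper's proof: both reduce the claim to the sufficient condition that the rational expression inside the ceiling of the main theorem be at most $\tau-2$, use monotonicity of that expression in $\auxstates$, and solve the resulting linear inequality separately in the two regimes $\auxstates \le \cycl(f)$ and $\auxstates > \cycl(f)$, with the final clamp to nonnegative integers. The only difference is that you supply an explicit argument for the monotonicity (via $|\img(f)| \ge \fix(f)$) where the paper simply asserts it.
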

\begin{proof}
	Since $\extrabit$ is always 0 or 1, by \cref{thm:auxstates} we know that we can
	implement $f$ if $\tau$ and $\auxstates$ obey
	\ba
	\tau &\ge  \left\lceil\frac{\auxstates + |\X| + \max \! \big[\!\cycl (f) - \auxstates, 0\big] - \fix (f) }{\auxstates + |\X| - |\img (f) |}\right\rceil + 1  \,.
	\ea
	This inequality will hold if
	\ba
	\tau &\ge  \frac{\auxstates + |\X| + \max \! \big[\!\cycl (f) - \auxstates, 0\big] - \fix (f) }{\auxstates + |\X| - |\img (f) |} + 2 \,.
	\ea
	The RHS 
	is non-increasing in $\auxstates$.  
	So we can implement $f$ in $\tau$ timesteps, as desired, if 
	$\auxstates$ is the smallest integer that obeys the inequality.
	
	First hypothesize that the smallest such $n$ is less than $\cycl(f)$. In this case
	$\max \! \big[\!\cycl (f) - \auxstates, 0\big] = \cycl(f) - \auxstates$. So our bound becomes
	\ba
	\tau &\ge  \frac{\auxstates + |\X| +\cycl (f) - \auxstates - \fix (f) }{\auxstates + |\X| - |\img (f) |} + 2 \,,
	\ea
	which is saturated if
	\ba
	\auxstates= 
	\frac{|\X|(3 - \tau) - \fix(f) + \cycl(f)}{(\tau - 2) } + |\img(f)| \,.
	\ea

	If instead the least $\auxstates$ that obeys our inequality is greater than or equal to $\cycl(f)$, then our bound becomes
	\ba
	\tau &\ge  \frac{\auxstates + |\X| - \fix (f) }{\auxstates + |\X| - |\img (f) |} + 2  \,,
	\ea
	which is saturated if
	\ba
	\auxstates = \frac{ |\!\img(f)|(\tau - 2) - \fix(f)}{\tau - 3} - |\X|  \,.
	\ea

	The fact that $\auxstates$ must be a nonnegative integer completes the proof.
\end{proof}

%
%

\begin{corollary}
Any $f$ can be implemented in two timesteps, as long as $\vert \img(f) \vert$ hidden states are available.
\label{ex:maxstates}
\end{corollary}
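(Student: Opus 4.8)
The plan is to reduce the claim, via \cref{thm:central_result}, to exhibiting a composition of exactly two idempotent functions over $\Y := \X \cup Z$ with $|Z| = |\img(f)|$ whose restriction to $\X$ equals $f$. So first I would enumerate the image as $\img(f) = \{y_1,\dots,y_m\}$ with $m = |\img(f)|$, and introduce one fresh hidden state per image element, $Z = \{z_1,\dots,z_m\}$, so that $|Z| = m$ hidden states are used.

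Next I would define two functions on $\Y$. Let $\phi$ send each $x\in\X$ with $f(x)=y_i$ to $z_i$, and fix every $z_i$; let $\psi$ send each $z_i$ to $y_i$, and fix every element of $\X$. Both are idempotent for the same structural reason: the image of $\phi$ is exactly $Z$, every element of which $\phi$ fixes, so $\phi\circ\phi=\phi$; likewise the image of $\psi$ lies in $\X$, every element of which $\psi$ fixes, so $\psi\circ\psi=\psi$. For any $x\in\X$ with $f(x)=y_i$ we then get $\psi(\phi(x))=\psi(z_i)=y_i=f(x)$, so $(\psi\circ\phi)|_{\X}=f$. By \cref{lemma:mto_implementable} each of $\phi,\psi$ is one-step, so their (matrix) product is a product of two one-step matrices implementing $f$ with $m$ hidden states; invoking \cref{thm:central_result} then gives $\Ctime[{|\img(f)|}]{f}\le 2$. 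If strictly more than $|\img(f)|$ hidden states are available, I would simply designate $|\img(f)|$ of them to play the role of the $z_i$ and have both $\phi$ and $\psi$ fix all the remaining hidden states; the two functions stay idempotent and the composite still restricts to $f$, so the bound holds whenever at least $|\img(f)|$ hidden states are available.

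I do not expect a genuine obstacle here: the construction is essentially the canonical ``route each preimage class through its own scratch state, then dump the scratch states onto their targets.'' The only points needing care are (i) verifying idempotency, which is handled by the observation that each map fixes everything in its own image, and (ii) checking that composing $\phi$ and $\psi$ does not leak probability out of $\X$ — but this is automatic, since $\phi$ maps $\X$ into $Z$ and $\psi$ maps $Z$ back into $\X$, so the composite restricted to $\X$ is exactly the $0/1$ matrix of $f$ and condition $M_{[\X]}=\map$ of \cref{def:hidden_state} holds. As an aside, one could note that by \cref{lem:towers_hanoi} a function has time cost $1$ precisely when it is idempotent, so for non-idempotent $f$ the bound of two timesteps is tight.
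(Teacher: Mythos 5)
Your proposal is correct and matches the paper's own proof essentially verbatim: both decompose $f$ into the same two idempotents, first routing each $x$ to the hidden state tagged by $f(x)$ (fixing the hidden states), then collapsing each hidden tag onto its image element (fixing $\X$), and both rely on \cref{lemma:mto_implementable} and \cref{thm:central_result} to convert this into a two-timestep CTMC implementation. Your added checks of idempotency and of the case of surplus hidden states are fine and consistent with the paper.
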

\begin{proof}
	Consider an implementation of $f$ when $\auxstates = \vert \img(f) \vert$ hidden states are available. Index the states in $\Y$ using $1,\dots,\vert \X \vert$ for the states in $\X$ and $\vert \X \vert + 1,\dots, \vert \X \vert + \auxstates$ for the hidden states. The function $f$ can then be implemented as a product of two idempotents:
	\begin{enumerate}
		\item In the first step, for each $x \in \X$, both $x$ and $\auxstates+f(x)$ are mapped to $k+f(x)$;
		\item In the second step, for each $x' \in \img(f)$, both $x'$ and $\auxstates+x'$ are mapped to $x'$.
	\end{enumerate}
\end{proof}

\begin{corollary}
If $f: \X \to \X$ is a cyclic permutation with no fixed points 
and there is one hidden state available, then the time cost is $|\X| + 1$. 
\label{cor:cyclicorbit}
\end{corollary}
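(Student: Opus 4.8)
The plan is to evaluate the main result (\cref{thm:auxstates}) at the relevant parameter values and then argue that the extra bit $\extrabit[1][f]$ must vanish by exhibiting an explicit optimal construction. First I would substitute into \cref{eq:mainresult}: for a cyclic permutation $f$ with no fixed points we have $\fix(f) = 0$, $|\img(f)| = |\X|$, and $\cycl(f) = 1$; with $\auxstates = 1$ hidden state, the term $\max[\cycl(f) - \auxstates, 0] = \max[1 - 1, 0] = 0$, so the numerator inside the ceiling is $1 + |\X| + 0 - 0 = |\X| + 1$ while the denominator is $1 + |\X| - |\X| = 1$. Hence $\Ctime[1]{f} = \lceil |\X| + 1 \rceil + \extrabit[1][f] = |\X| + 1 + \extrabit[1][f]$, and it remains only to show $\extrabit[1][f] = 0$.

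The natural way to pin down the extra bit is to give a construction of $f$ as a product of exactly $|\X| + 1$ idempotent functions over $\Y = \X \cup \{z\}$ (one hidden state), since by \cref{thm:central_result} the time cost equals the minimal number of such idempotents. I would generalize the bit-flip construction of \cref{ex:capacitor}, which handles the case $|\X| = 2$: write $\X = \{x_0, x_1, \dots, x_{n-1}\}$ with $f(x_i) = x_{i+1 \bmod n}$, and use the hidden state $z$ as a ``buffer.'' The idea is to first park the contents of one state into $z$, then cyclically shift each remaining state into its successor's slot one at a time (each such shift being an idempotent that merges two states), and finally unload $z$. One should check carefully that each step is genuinely idempotent (it maps two states to one of them and fixes everything else) and that the composition, restricted to $\X$, realizes the cyclic permutation; counting the steps should give exactly $n + 1 = |\X| + 1$. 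Since this matches the value from \cref{thm:auxstates} with $\extrabit[1][f] = 0$, and the theorem guarantees $|\X|+1$ is also a lower bound (as $\extrabit \ge 0$), we conclude $\Ctime[1]{f} = |\X| + 1$.

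Alternatively — and perhaps more cleanly — one can cite the second special case already flagged in the text (``when $f$ is a cyclic permutation and there is one hidden state available, the time cost is exactly $|\X| + 1$''), whose construction presumably appears in the supplementary material; the corollary then follows by combining that construction's upper bound of $|\X|+1$ with the lower bound $|\X|+1+\extrabit[1][f] \ge |\X|+1$ from \cref{thm:auxstates}. Either route is short. The main obstacle is the explicit construction: one must verify that a sequence of $|\X|+1$ idempotents over a single extra state really does compose to the full cyclic permutation on $\X$, and in particular that no shorter sequence works — but the latter is already handed to us by the lower bound in \cref{thm:auxstates}, so the only real work is writing down the $|\X|+1$ idempotents and checking their composition, which is a routine (if slightly fiddly) bookkeeping exercise analogous to \cref{ex:capacitor} and \cref{ex:cycle16}.
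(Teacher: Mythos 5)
Your proposal is correct and follows essentially the same route as the paper: obtain the lower bound $|\X|+1$ from \cref{thm:auxstates} (since $\extrabit[1]\in\{0,1\}$ the theorem leaves only $|\X|+1$ or $|\X|+2$), then match it with an explicit sequence of $|\X|+1$ idempotents that parks one state in the hidden ``buffer,'' shifts the remaining $|\X|-1$ states one at a time (in the order that avoids overwriting), and finally unloads the buffer --- which is exactly the construction the paper writes out. The only detail left implicit in your sketch, and made explicit in the paper, is the ordering of the shifts ($\{|\X|,|\X|+1\}\mapsto|\X|+1$, then $\{|\X|-1,|\X|\}\mapsto|\X|$, down to $\{1,2\}\mapsto 2$, then $\{1,|\X|+1\}\mapsto 1$).
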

\begin{proof}
	\cref{thm:auxstates} tells us that the time cost of $f$ is $|\X| + 1$ or $|\X| + 2$.  To show that it is in fact $|\X|+1$, write the states of $\X$ as
	$\{1, 2, \ldots, |\X|\}$, with the single hidden state written as $|\X| + 1$. Assume without loss of generality that the states are numbered so that $f(i) = i+1 \; \mathrm{mod} \; |\X|$. Then have the first idempotent function send 
	$\{|\X|, |\X| + 1\} \mapsto |\X| + 1$ (leaving all other states fixed), the second function send  
	$\{|\X| - 1, |\X|\} \mapsto |\X| $ (leaving all other states fixed), etc., up to the $|\X|$'th idempotent function, which sends
	$\{1, 2\} \mapsto 2$ (leaving all other states fixed). Then have the last idempotent function send 
	$\{1, |\X| + 1\} \mapsto 1$ (leaving all other states fixed). It is easy to verify that this sequence of $|\X| +1 $ idempotent
	functions performs the cyclic orbit, as claimed.
\end{proof}
\noindent It is straightforward to use the proof technique of \cref{cor:cyclicorbit} to show that, in \cref{thm:auxstates}, $b(f, 1) = 0$ for any invertible $f$.

\subsection*{Extension to allow visible states to be coarse-grained macrostates}
\label{sec:coarse-graining}

If the visible states are identified with a
set of macrostates given by coarse-graining an underlying set of microstates, then
the framework introduced above, where $\X \subseteq \Y$,
does not directly apply. It turns out though that we can generalize
that framework to apply to such scenarios as well. To show
how we start with the following definition:

\newcommand{\Xm}{\mathcal{Z}}
\newcommand{\fm}{\hat{f}}
\newcommand{\nm}{n}
\newcommand{\om}{\omega}

\begin{definition}
\label{def:cg}
A function $\fm:\Xm \to \Xm$
{can be implemented with $\nm$ microstates and $\ell$ timesteps}
if and only if there exists a set $\Y$ with $\nm$ states and a partial
function $g : \Y \to \Xm$ such that
\betight
\item $\mathrm{img}(g)=\Xm$,\label{enu:imgcond}
\item there exists a stochastic matrix $M$ over $\Y$ which is a product of $\ell$ one-step matrices,
\item for all $i \in \dom(g)$, $\sum_{j \in g^{-1}(\fm(g(i))} M_{ji} = 1$.
\label{cond:sum}
\eetight
The minimal number $\nm$ such that $\fm$ can be
implemented with $n$ microstates (for some associated $g$ and $M$, and any number of timesteps)
we call the {microspace cost} of $\fm$.
\end{definition}

Note that we allow the coarse-graining function to be partially specified,
meaning that some microstates may have an undefined corresponding macrostate.
Nonetheless, condition \ref{enu:imgcond} in \cref{def:cg}
provides that each macrostate is mapped to by at least one microstate.
An example of \cref{def:cg} is given by the class of scenarios
analyzed in the previous sections, in which $\Xm = \X\subseteq \Y$,
$g(x)=x$ for all $x\in \X$ and is undefined otherwise, and
the elements $\Y\backslash \X$ are referred to as hidden states. Note, however, that
in \cref{def:cg}, we specify a number of microstates,
rather than a number of hidden states. As illustrated in \cref{ex:coarse_graining},
this flexibility allows us to consider scenarios in which each $z\in \Xm$
is not a single element of the full space $\Y$, but rather a coarse-grained
macrostate of $\Y$.

\begin{definition}
Let $\fm$ be a single-valued function over $\Xm$ that can be implemented 
with $\nm$ microstates.
Then we say that the {(hidden) time(step) cost} of $\fm$ with $\nm$ microstates
is the minimal number $\auxsteps$ such that
$\fm$ can be implemented with $\nm$ microstates.
\label{def:extended_switch}
\end{definition}

\noindent The minimization in \cref{def:extended_switch} is implicitly over 
the set of partial macrostates, the matrix $M$, and the function ${g}$.

The proof of the following \namecref{thm:macromicro} is left for the Supplementary Information.
\begin{restatable}{theorem}{thmmacromicro}
Assume $\fm : \Xm \to \Xm$ can be implemented with $\nm$ microstates and $\auxsteps$ timesteps. Then there is a stochastic matrix $W$ over 
a set of $\nm$ states $\Y$, a subset $\X\subseteq \Y$ with $|\X| = |\Xm|$, 
and a one-to-one mapping $\om : \Xm \to \X$ such that
\betight
	\item $W$ is a product of $\auxsteps$ one-step matrices
	\item The restriction of $W$ to $\X$ carries out the function $f(x) := \om(\fm(\om^{-1}(x))$
\eetight
\label{thm:macromicro}
\end{restatable}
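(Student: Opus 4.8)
The plan is to reduce the ``coarse-grained microstate'' setting of \cref{def:cg} to the ``disjoint hidden states'' setting of the main text, by constructing an explicit injection $\om : \Xm \to \X$ and re-organizing the microstate space so that the macrostate cells become singletons with spare states attached, while preserving the one-step structure of each factor in the product $M$. First I would unpack the hypothesis: we are given $\Y$ with $|\Y| = \nm$, a partial function $g : \Y \to \Xm$ with $\img(g) = \Xm$, and a product $M = M^{(\auxsteps)} \cdots M^{(1)}$ of one-step matrices over $\Y$ satisfying condition \ref{cond:sum} of \cref{def:cg}, namely $\sum_{j \in g^{-1}(\fm(g(i)))} M_{ji} = 1$ for all $i \in \dom(g)$. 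The idea is to pick, for each $z \in \Xm$, a single representative microstate $\om(z) \in g^{-1}(z)$ (possible since $g^{-1}(z)$ is nonempty by condition \ref{enu:imgcond}); set $\X := \om(\Xm) \subseteq \Y$; and show that the same product $M$, possibly after a harmless relabeling, already implements $f := \om \circ \fm \circ \om^{-1}$ when restricted to $\X$ in the sense of \cref{def:hidden_state}.

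The key steps, in order, are as follows. (i) Verify that condition \ref{cond:sum} applied to the representative microstates $i = \om(z)$ says exactly that all probability starting from $\om(z)$ ends in the cell $g^{-1}(\fm(z))$; I then need to push this forward through $\om$ so that it ends at the \emph{single} state $\om(\fm(z)) = f(\om(z))$. This is the crux: as stated, $M_{j, \om(z)}$ may be spread over the whole cell $g^{-1}(\fm(z))$, not concentrated on $\om(\fm(z))$. To fix this, I would append one extra ``collapsing'' one-step matrix $M^{(\auxsteps+1)}$ — the idempotent that maps each cell $g^{-1}(z')$ onto its representative $\om(z')$ and fixes everything else — and argue, via \cref{lemma:mto_implementable}, that this idempotent is one-step. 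Composing, $M' := M^{(\auxsteps+1)} M$ is a product of $\auxsteps + 1$ one-step matrices whose restriction to $\X$ carries out $f$. (ii) But the theorem demands exactly $\auxsteps$ factors, so I must instead absorb the collapse into the \emph{last} factor $M^{(\auxsteps)}$: show that $M^{(\auxsteps+1)} M^{(\auxsteps)}$ is itself one-step. This follows because the composition of a one-step matrix with an idempotent that only merges states already identified by that one-step matrix's final adjacency pattern does not introduce any new discontinuity in the set of allowed transitions — or, more robustly, because one can choose the limiting CTMCs so that the final collapse happens within the open interval without changing the border adjacency; here I would invoke the flexibility in \cref{def:singletimestep}, condition \ref{cond:constantadjacnecy}, that the set of allowed transitions need only be constant on the open interval. (iii) Finally, check that $\om$ is one-to-one by construction, that $|\X| = |\Xm|$, and that the restriction $W_{[\X]}$ of $W := M^{(\auxsteps+1)} M^{(\auxsteps)} \cdots M^{(1)}$ equals $f$ under the identification $\om$, using that no probability leaks out of $\X$ because $f(\X) \subseteq \X$ and each column sums to one within $\X$ (the argument following \cref{def:hidden_state}).

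The main obstacle I anticipate is step (ii): justifying that the extra collapsing step can be fused into the last one-step block without increasing the timestep count. The subtlety is that ``one-step'' is defined via a limiting sequence of finite-rate CTMCs with a constant allowed-transition set on $(0,1)$, and naively concatenating two one-step blocks produces a change of adjacency at the junction. The resolution is to build a single limiting family directly: run the original limiting CTMC for the last block on, say, $t \in (0, 1/2)$ and the collapsing idempotent's CTMC on $t \in (1/2, 1)$, but choose rates so that the union of the two allowed-transition sets is what is ``allowed'' throughout $(0,1)$ — this is legitimate because \cref{def:singletimestep} only requires that each transition is \emph{either} always positive \emph{or} always zero on $(0,1)$, and transitions can be made positive-but-vanishingly-small outside their ``active'' subinterval. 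Verifying continuity of $T(0,t)$ and $T(t,1)$ on the appropriate half-open intervals for this hand-crafted family is the technical heart of the argument, but it is a routine (if fiddly) limiting construction of exactly the same flavor as the idempotent construction in Supplementary Note 2. Everything else — injectivity of $\om$, the cardinality bookkeeping, and the restriction computation — is straightforward.
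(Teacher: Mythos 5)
Your overall strategy (collapse each macrostate cell onto a canonical representative at the end of the process) is the right instinct, and the bookkeeping in steps (i) and (iii) is fine, but step (ii) --- fusing the collapsing idempotent into the last one-step block so as not to increase the timestep count --- is a genuine gap, and the argument you sketch for it does not work. A product of two one-step matrices is not in general one-step (if it were, every map would have time cost $1$ and the main theorem would be vacuous); the obstruction is transitivity (\cref{theorem:transitivity}): every one-step matrix $A$ must satisfy $A_{ji}>0,\ A_{kj}>0 \Rightarrow A_{ki}>0$. Your ``union of allowed transitions'' construction cannot circumvent this: if a transition's rate is made vanishingly small outside its active subinterval, then in the limit $T_{ij}(0,t)$ is zero on part of $(0,1)$ and positive on the rest, violating condition 3 of \cref{def:singletimestep}; if it is instead kept bounded away from zero, the limiting matrix $T(0,1)$ acquires extra nonzero entries and no longer equals the intended product. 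Concretely, with arbitrarily chosen representatives the fused matrix can fail transitivity: take $\Y=\{1,2,3,4\}$, last-block idempotent $\gamma: 1\mapsto 2,\ 2\mapsto 2,\ 3\mapsto 4,\ 4\mapsto 4$, and a cell $\{2,3\}$ whose representative you happen to pick as $3$. The composition of $\gamma$ with your collapse then sends $1\mapsto 3$ and $3\mapsto 4$ but not $1\mapsto 4$, so it is not transitive, hence not one-step (equivalently: a $0/1$-valued one-step matrix must carry out an idempotent function, by \cref{lem:idem-equiv}, and this composition is not idempotent).

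The paper closes exactly this gap by (a) first replacing every block $L^{(i)}$ by an idempotent $\tilde L^{(i)}$ with $\adj[\tilde L^{(i)}]\subseteq\adj[L^{(i)}]$ (\cref{lem:idem-equiv} and \cref{lem:prod-equiv}), so that the last block carries out a genuine idempotent $\gamma$, and (b) choosing the representatives non-arbitrarily: for every macrostate $z$ whose cell meets $D:=\img(\gamma)\cap\dom(g)$, the representative $\om(z)$ is taken inside $g^{-1}(z)\cap D$, i.e., it is a fixed point of $\gamma$ lying in the same cell. With that choice, redirecting the outputs of $\gamma$ onto the representatives yields a map $\gamma'$ that is itself idempotent --- one checks $\gamma'(\gamma'(y))=\gamma'(y)$ using precisely that the representatives are $\gamma$-fixed --- so the last block is simply \emph{replaced} by the single one-step matrix carrying out $\gamma'$, and no extra timestep ever needs to be fused away. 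If you adopt these two modifications, your collapsing idea becomes the paper's proof.
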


We are finally ready to prove the equivalence between time cost as defined in previous sections, and time cost for computations over coarse-grained spaces.

\begin{corollary}
Consider a system with microstate space $\Y$. The hidden time cost of a function $\fm$ over a coarse-grained space $\Xm$ with $n$ microstates equals the
hidden time cost of $\fm$ (up to a one-to-one mapping between $\Xm$ and $\X \subseteq \Y$) with $n-|\Xm|$ hidden states.
\label{coroll:coarse-graining}
\end{corollary}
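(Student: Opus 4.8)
The corollary is essentially a bookkeeping consequence of \cref{def:cg}, \cref{def:extended_switch}, and \cref{thm:macromicro}, together with the earlier characterization (\cref{thm:central_result}) of the ordinary hidden time cost as the minimal number of idempotents over $\Y = \X \cup \{1,\dots,\auxstates\}$ whose product equals $f$ on $\X$. The idea is to prove two inequalities: the microstate-based time cost (minimizing over partial coarse-grainings $g$, matrices $M$, etc.) is at most the ordinary hidden time cost with $n - |\Xm|$ hidden states, and vice versa.

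\medskip

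\noindent\textbf{Step 1 (easy direction).} First I would show that any implementation of $\fm$ in the sense of \cref{def:hidden_state}/\cref{thm:central_result} — i.e., via $\X \subseteq \Y$ with $|\Y| = |\Xm| + (n - |\Xm|) = n$, identity coarse-graining $g(x) = x$ on $\X$ and undefined on $\Y \setminus \X$, and a product $M$ of $\auxsteps$ one-step matrices with $M_{[\X]} = f$ — satisfies conditions \ref{enu:imgcond}–\ref{cond:sum} of \cref{def:cg} with $\nm = n$. Condition \ref{enu:imgcond} is immediate since $g$ is the identity on $\X$; condition \ref{cond:sum} reduces to $\sum_{j \in \X,\, g(j) = f(g(i))} M_{ji} = 1$, which follows from $M_{[\X]} = f$ and the fact (noted after \cref{def:hidden_state}) that no probability leaks out of $\X$. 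Hence the microspace-based time cost with $n$ microstates is $\le$ the ordinary hidden time cost with $n - |\Xm|$ hidden states.

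\medskip

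\noindent\textbf{Step 2 (converse direction, the substantive one).} Here I would invoke \cref{thm:macromicro}: given any implementation of $\fm$ with $n$ microstates and $\auxsteps$ timesteps, there is a stochastic matrix $W$ over a set $\Y$ of $n$ states, a subset $\X \subseteq \Y$ with $|\X| = |\Xm|$, and a bijection $\om : \Xm \to \X$, such that $W$ is a product of $\auxsteps$ one-step matrices and $W_{[\X]}$ carries out $f = \om \circ \fm \circ \om^{-1}$. But $\X \subseteq \Y$ with $|\Y \setminus \X| = n - |\Xm|$ is exactly the setting of \cref{def:hidden_state}: $W$ implements $f$ with $n - |\Xm|$ hidden states in $\auxsteps$ timesteps (up to the relabeling $\om$). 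Taking $\auxsteps$ to be the microstate-based time cost gives that the ordinary hidden time cost of $\fm$ (identified with $f$ via $\om$) with $n - |\Xm|$ hidden states is $\le$ the microstate-based cost with $n$ microstates. Combining the two inequalities yields equality, and since $\om$ is a one-to-one correspondence between $\Xm$ and $\X$, the phrase ``up to a one-to-one mapping between $\Xm$ and $\X$'' is exactly the content of that identification.

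\medskip

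\noindent\textbf{Main obstacle.} The only real subtlety — and it is already isolated and deferred to the Supplementary Information as \cref{thm:macromicro} — is going from a \emph{partial} coarse-graining $g$ with possibly many microstates per macrostate to a clean disjoint-union picture $\X \subseteq \Y$ with a single designated representative microstate per visible state, without changing the number of states $n$ or the number of timesteps $\auxsteps$. Given that theorem, the present corollary is a short two-way inequality argument; I expect no genuine difficulty beyond carefully checking that conditions \ref{enu:imgcond}–\ref{cond:sum} and \cref{def:hidden_state} really are the same statement once $g$ is the identity on its (full) domain $\X$.
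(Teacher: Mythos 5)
Your proposal is correct and matches the paper's own proof essentially step for step: both directions are established exactly as you describe, with the substantive inequality delegated to \cref{thm:macromicro} and the easy inequality obtained by observing that the disjoint-union/hidden-states setup (with the coarse-graining $g$ taken to be $\om^{-1}$ on $\img(\om)$ and undefined elsewhere) satisfies the conditions of \cref{def:cg}. The only cosmetic difference is that the paper names the bijection $\om$ up front and defines $g:=\om^{-1}$ rather than speaking of an identity coarse-graining, which you already account for in your Step 2.
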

\begin{proof}
Let $\auxsteps$ indicate the time cost of $\fm : \Xm \to \Xm$ with $\nm$ microstates, and let $M$ be a stochastic matrix that achieves this (microstates-based) time cost.  Similarly, let $\auxsteps'$ indicate the time cost of carrying out $\hat{f}$ over $\X \subseteq \Y$ (up to a one-to-one mapping between $\Xm$ and $\X$, which we call $\omega : \Z \to \X$) with $n-|\Xm|$ hidden states, and let $M'$ be a stochastic matrix that achieves this (hidden-states-based) time cost. We prove that $\auxsteps = \auxsteps'$ by proving the two inequalities, $\auxsteps \le \auxsteps'$ and $\auxsteps' \le \auxsteps$.

By \cref{thm:macromicro}, it must be that there exists an implementation of $\fm$ over $\X$ with $n-|\Xm|$ hidden states and $\auxsteps$ timesteps. Thus, $\auxsteps' \le \auxsteps$.  We can also show that $\auxsteps \le \auxsteps'$.  To do so, define the coarse-graining function $g(x) := \omega^{-1}(x)$ for all $x \in \img(\omega)$, and $g(x)$ undefined for all $x \not\in\img(\omega)$. It is easy to verify that $M$ and $g$ satisfies the conditions of \cref{def:cg} with $n$ microstates and $\auxsteps'$ timesteps. Thus, $\auxsteps \le \auxsteps'$.
\end{proof}

\section*{Data availability statement}

No datasets were generated or analysed during the current study.

\begin{acknowledgments}
We would like to thank the Santa Fe Institute for helping to support this research. 
This paper was made possible through the support of Grant No.
FQXi-RFP-1622 from the FQXi foundation, and Grant No. CHE-1648973 from the U.S. National Science Foundation. 
\end{acknowledgments}

\section*{Competing Interests}

The authors declare no competing interests.

\section*{Contributions}

DHW came up with the project; the research was done by AK, JAO, DHW;
the writing was done by AK, JAO, DHW.

%


\clearpage
\setcounter{theorem}{14}

\renewcommand\appendixname{Supplementary Note}
\appendix
\renewcommand\thesection{\arabic{section}}

\renewcommand{\theequation}{\arabic{equation}}
\setcounter{definition}{9}

\section{Explicit demonstration that bit erasure is a one-step function}

\label{app:ex:1}
In the model of bit erasure described in~\cite{diana2013finite} a classical bit is stored in a quantum dot, which can be either empty (state 0) or filled with an electron (state 1).  The dot is brought into contact with a metallic lead at temperature $\temp$ which can transfer an electron to/from the dot. The propensity of the lead to give an electron is set by its chemical potential, indicated by $\mu(t)$ at time $t$. The energy of an electron in the dot is indicated by $E(t)$.

Let $p(t)$ indicate the two-dimensional vector of probabilities at time $t$, with $p_0(t)$ and $p_1(t)$ being the probability of an empty and full dot, respectively.  These probabilities evolve according to 
a rate matrix~\cite{diana2013finite}:
\begin{align}
\dot{p}(t) = C
\begin{bmatrix}
 - w(t) & 1-w(t) \\
  w(t) &  -(1-w(t))
\end{bmatrix} p(t)
\label{eq:quantum-dot-deriv}
\end{align}
where $C$ sets the timescale of the exchange of electrons between the dot and the lead and 
$w(t)$ is the Fermi distribution of the lead,
\begin{equation}
w(t) = \left[\exp((E(t)-\mu(t))/k_B \temp)+1\right]^{-1}\,.
\end{equation}

Using \eqref{eq:quantum-dot-deriv} and conservation of probability (i.e., $p_0(t) + p_1(t) = 1$), we can write
\begin{equation}\label{master}
\dot{p_1}(t) = C (w(t) - p_1(t)) \,,
\end{equation}
so $p_1(t)=w(t)$ is the stationary state at time $t$.

Suppose that the chemical potential $\mu(t)$ and electron energy $E(t)$ are chosen in such a way that $w(t) = (1-t) q + t \delta$ for some constants $q$ and $\delta$. 
In this case, \eqref{master} can be explicitly solved for $p_1$,
\begin{equation}
p_1(t) = w(t) + e^{-Ct}\left(p_1(0)-q\right)+ C^{-1}(q-\delta)\left(1-e^{-Ct}\right).
\end{equation}
In the limit where $C \to \infty$ and $\delta \to 0$, we have \begin{equation}p_1(t) = w(t) = (1-t)q \,,\end{equation}
which corresponds to the transition matrix
\begin{align}
T(0, t) =
\begin{bmatrix}
1-(1-t)q & 1-(1-t)q \\
(1-t) q & (1-t) q
\end{bmatrix}\,.
\end{align}
Note that 
$T(0,1) = \left( \begin{smallmatrix} 1&1\\ 0&0 \end{smallmatrix} \right)$, so the process implements bit erasure.  By \cref{lem:prod-T}, it must also be that $T(t,1) = \left( \begin{smallmatrix} 1&1\\ 0&0 \end{smallmatrix} \right)$.  We note that $T(0,t)$ and $T(t,1)$ are continuous in $t$ and have a constant set of allowed transitions over $t \in (0,1)$, which establishes that bit erasure is one-step.

\section{Properties of master equations that implement idempotent functions in one timestep}
\label{app:singletimestep}

We begin by proving that any idempotent function over a finite $\X$ is one-step,
\cref{lemma:mto_implementable}.
Let $f$ be an idempotent function, and let $\map_{ij} = \delta(i,f(j))$ be the corresponding stochastic matrix.  
We use an explicit construction to show that there exists a sequence of CTMCs $\{ T^{(n)} : n=1,2,\dots \}$  which 
obey the conditions of \cref{def:singletimestep}. 

First, choose any arbitrary probability distribution $q$ over $\X$, and let $q_i$ indicate the probability of state $i$. Define
\begin{align}
\tilde{q}_i := \begin{cases}
q_i/\sum_{j:f(j)=f(i)} q_j & \text{if $\sum_{j:f(j)=f(i)} q_j > 0$}\\
0 & \text{otherwise}
\end{cases}
\label{eq:qtilde}
\end{align}
Each $\tilde{q}_i$ is the `renormalized' probability within the block of states $\{ j : f(j) = f(i) \}$.  

Then, for all $i$,  define
\begin{equation}
w_i(t) = (1-t)\tilde{q}_i + t \delta(i, f(i)) \,.
\end{equation}
where $\delta(\cdot,\cdot)$ is the Kronecker delta function.

Then, define the rate matrix $Q^{(n)}(t)$ as 
\begin{align}
Q_{ij}^{(n)}(t) = \begin{cases}
n w_i(t) & \text{if $i\ne j$ and $f(i) = f(j)$} \\
n (w_i(t) - 1 ) & \text{if $i=j$} \\
 0 & \text{otherwise}
\end{cases}
\label{eq:ratematrix}
\end{align}
It can be verified that if $f$ is an idempotent function, then $Q^{(n)}(t)$ is a valid rate matrix (that is, $Q^{(n)}_{ij}(t)\ge 0$ for all $i,j$ and $\sum_i Q^{(n)}_{ij}(t) = 0$ for all $j$).

Next, for any $n\in \mathbb{N}$, define the CTMC $T^{(n)}(t,t')$ as the solution to the following differential equation,
\begin{align}
T^{(n)}_{ij}(t,t) & = \delta(i,j)  \label{eq:idcond} \\
\frac{d}{dt'} T^{(n)}_{ij}(t,t') & = \sum_{k} Q^{(n)}_{ik}(t') T^{(n)}_{kj}(t,t') \label{eq:derivcond}
\end{align}
We can simplify \cref{eq:derivcond} by using the definition of $Q^{(n)}(t)$. First note that no probability can ever flow from state $j$ to state $i$ if $f(i) \ne f(j)$, hence for such $i,j$, $T^{(n)}_{ij}(t,t') = 0$ always.  On the other hand, for $i,j$ where $f(i)=f(j)$, we 
can rewrite 
\begin{align}
& \frac{d}{dt'} T^{(n)}_{ij}(t,t') = \sum_{k} Q^{(n)}_{ik}(t') T^{(n)}_{kj}(t,t')  \nonumber \\
 & = n \left[  (w_i(t') - 1) T^{(n)}_{ij}(t,t') + \quad \mathclap{\sum_{k : k\ne i, f(k)=f(i)}} \quad w_i(t') T^{(n)}_{kj}(t,t') \right] \nonumber \\
 & = n \left(  (w_i(t')-1) T^{(n)}_{ij}(t,t') + w_i(t') \left( 1- T^{(n)}_{ij}(t,t') \right) \right) \nonumber \\
 & = n \left(   w_i(t') - T^{(n)}_{ij}(t,t') \right) \label{eq:simplified}
\end{align}

\cref{eq:simplified}, in combination with initial condition \cref{eq:idcond}, can be explicitly solved to give 
\begin{multline}
T^{(n)}_{ij}(t,t')  = w_i(t') + (\delta(i,j) - w_i(t))e^{-(t'-t)/n} \\
+  n^{-1}(\tilde{q}_i - \delta(i,f(i)))(1-e^{-(t'-t)/n}) \,.
\end{multline}
The $n\rightarrow \infty$ limit for $t' \ge t$ is 
\begin{align}
T_{ij}(t,t') & := \lim_{n\rightarrow \infty} T^{(n)}(t,t') \\
& =
\begin{cases}
w_i(t') & \text{if $t' > t$ and $f(i)=f(j)$} \\
0 & \text{if $t' > t$ and $f(i)\ne f(j)$}  \\
\delta(i,j) & \text{if $t=t'$}
\end{cases}
\label{eq:Tlimit}
\end{align}
As a particular case, for $t=0, t'=1$, we have
\begin{align}
T_{ij}(0,1) = \delta(i,f(j)) = P \,,
\label{eq:b7}
\end{align}
where we've used the fact that $w_i(1) = \delta(i,f(i))$.  

We have thus shown that $\{ T^{(n)} : n = 1,2, \dots \}$ is a limit-embedding of $P$,
as required for any one-step matrix. Next, the condition \cref{def:singletimestep}(1) on the sequence $\{ T^{(n)} : n = 1,2, \dots \}$ is met by inspection. 
In addition, since $w_i(t)$ is a continuous function of $t \in [0, 1]$, 
it follows both that $T(0, t)$ is a continuous function of $t$ for all $t \in (0, 1]$
and that $T(t, 1)$ is a continuous function of $t$ for all $t \in [0, 1)$. This
establishes that \cref{def:singletimestep}(2) holds.  Finally,
\cref{def:singletimestep}(3) holds by construction. 

Thus, all the conditions
given in \cref{def:singletimestep} concerning the limiting matrix $T(t,t')$ are
satisfied, which establishes the claim that $P$ is a one-step matrix. 
Note in particular that even though $\{ T^{(n)} : n = 1,2, \dots \}$ is defined
in terms of one particular initial distribution $q$, the associated transition
matrix $T(0, t)$ implements $P$ no matter what the initial distribution is.



It is worth highlighting three properties of the 
construction above. 
%

{First}, when $q$ equals the initial distribution $p(0)$, the 
function $p(t) = T(0,t)p(0)$ is a continuous function of $t$
for all $t \in [0, 1]$. To see this, first note
that since 
$T(0, t)$ is continuous for all $t \in (0, 1]$,
$T(0,t)p(0)$ is continuous for all $t \in (0, 1]$.
Moreover,
\begin{align}
\lim_{t\rightarrow 0^+} p_i(t) & = \lim_{t\rightarrow 0^+} \sum_j T_{ij}(0,t) p_j(0) \nonumber \\
& = \sum_{j:f(j)=f(i)} w_i(0) p_j(0) \nonumber \\
& = \sum_{j:f(j)=f(i)} \frac{q_i}{\sum_{j':f(j')=f(i)} q_j} p_j(0) = p_i(0)
\end{align}
Therefore $p(t)$ in fact is continuous for all $t \in [0, 1]$, as claimed.

{Second,} when $q=p(0)$, then the above construction results in
no (irreversible) entropy production. More precisely, stochastic thermodynamics provides a simple formula for  
the rate of entropy production incurred by a system evolving according to a master equation, while being coupled to a thermodynamic reservoir~\cite{esposito2012stochastic, esposito2010three}:
%
\begin{proposition}
Consider a CTMC with finite rates $Q(t)$ and let $p(t)$ be a
distribution of states at time $t$ of a system that evolves according to that CTMC.  
The {(irreversible) entropy production} rate at time $t$ is 
\begin{align}
\dot{\Sigma}(Q(t), p(t)) := \sum_{i,j} p_j(t) Q_{ij}(t) \ln\frac{p_{j}(t)Q_{ij}(t)}{p_{i}(t)Q_{ji}(t)} 
\label{eq:EPdef}
\end{align}
The {integrated entropy production} over $t\in[0,1]$ is 
\begin{align}
\label{TotalEP}
\Sigma(Q, p(0)) = \int_0^1 \dot{\Sigma}(Q(t), p(t)) \; dt \,.
\end{align}
\end{proposition}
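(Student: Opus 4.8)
The statement is essentially a transcription into our notation of the standard expression for the entropy production of a Markov jump process coupled to a thermodynamic reservoir (see \cite{esposito2012stochastic,esposito2010three}), so there is nothing deep to establish; the ``proof'' I would give just records why the displayed quantities deserve the name ``entropy production.'' The plan has four steps: (i) note the standing assumption, implied by thermodynamic consistency (local detailed balance), that $Q_{ij}(t)>0$ whenever $Q_{ji}(t)>0$, under which every summand in \eqref{eq:EPdef} is of the form $a\ln(a/b)$ with $a,b\ge 0$ and $b>0$ whenever $a>0$, so the finite sum is well defined in $[0,\infty)$ for any $p(t)$ of full support; (ii) show $\dot\Sigma(Q(t),p(t))\ge 0$; (iii) exhibit the decomposition of the system's Shannon entropy rate that identifies \eqref{eq:EPdef} as the \emph{irreversible} part; and (iv) deduce the corresponding facts for the time integral \eqref{TotalEP}.

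For step (ii) I would fix $t$, abbreviate $p=p(t)$, $Q=Q(t)$, and pair the $(i,j)$ and $(j,i)$ terms of the sum. Setting $a=p_jQ_{ij}$ and $b=p_iQ_{ji}$, the pair contributes $a\ln(a/b)+b\ln(b/a)=(a-b)\ln(a/b)$, which is nonnegative because $\ln$ is increasing; diagonal terms $i=j$ vanish since $\ln 1 = 0$, and pairs with $Q_{ij}=Q_{ji}=0$ are absent. Summing over unordered pairs gives $\dot\Sigma\ge 0$, i.e.\ the second law. For step (iii) I would differentiate $S(p)=-\sum_i p_i\ln p_i$ along the master equation $\dot p_i=\sum_j Q_{ij}p_j$; using $\sum_i Q_{ij}=0$ one gets $\dot S=\sum_{i,j}p_jQ_{ij}\ln(p_j/p_i)$, and inserting $\ln(p_j/p_i)=\ln\frac{p_jQ_{ij}}{p_iQ_{ji}}-\ln\frac{Q_{ij}}{Q_{ji}}$ splits this as $\dot\Sigma=\dot S+\dot S_{\mathrm{env}}$, where $\dot S_{\mathrm{env}}=\sum_{i,j}p_jQ_{ij}\ln\frac{Q_{ij}}{Q_{ji}}$ is, by local detailed balance, the entropy exchanged with the reservoir (heat flow over temperature). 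Hence \eqref{eq:EPdef} is exactly the total (system plus reservoir) rate of entropy change, $\dot S+\dot S_{\mathrm{env}}$, which step (ii) showed is nonnegative — which is what it means to call it ``entropy production.''

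Step (iv) is then immediate: $t\mapsto\dot\Sigma(Q(t),p(t))$ is nonnegative and, being built from the continuous data $Q(t)$ and $p(t)$, measurable wherever $p(t)$ has full support, so the integral in \eqref{TotalEP} is a well-defined element of $[0,\infty]$ and $\Sigma(Q,p(0))\ge 0$ by monotonicity. I expect the only real friction to be bookkeeping about degenerate configurations --- transitions with $Q_{ij}(t)=0$ and, more importantly, instants (in practice only the endpoints $t=0,1$ for our finite-rate constructions) at which some $p_i(t)=0$ while probability flows into state $i$, where $\dot\Sigma$ blows up pointwise; one checks this can happen only on a set of $t$ of measure zero (otherwise $\dot p_i>0$ on an interval where $p_i\equiv 0$, a contradiction), so it does not affect finiteness of the integral. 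None of this is structural; the mathematical core is the one-line inequality $(a-b)\ln(a/b)\ge 0$ applied pairwise.
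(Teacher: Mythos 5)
This proposition is not actually proved in the paper: it functions as a definition, recalling the standard entropy-production formula from the stochastic thermodynamics literature (the references cited alongside it), and you correctly recognize that there is nothing structural to establish. Your justification is the standard one and is sound: the pairwise symmetrization giving $(a-b)\ln(a/b)\ge 0$ for nonnegativity, the balance decomposition $\dot\Sigma=\dot S+\dot S_{\mathrm{env}}$ obtained by differentiating the Shannon entropy along the master equation and invoking local detailed balance to interpret $\sum_{i,j}p_jQ_{ij}\ln(Q_{ij}/Q_{ji})$ as reservoir entropy flow, and the measure-zero caveat for instants where some $p_i(t)=0$. The only implicit hypothesis worth flagging explicitly, as you do, is weak reversibility ($Q_{ij}>0$ iff $Q_{ji}>0$), without which the logarithms in the definition are ill-posed; the paper assumes this tacitly by restricting to thermodynamically consistent rate matrices.
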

\noindent Now consider the rate matrices $Q^{(n)}(t)$ defined in \cref{eq:ratematrix}.  Note that for all $t \in [0,1]$, these rate matrices have a fixed ``block structure'', in which transitions are allowed between states $i,j$ in the same block ($f(i)=f(j)$), but not allowed between states $i,j$ in different blocks ($f(i)\ne f(j)$).  It is straightforward to verify that for  block-structure rate matrices, one can rewrite \cref{eq:EPdef,TotalEP} as a weighted sum of entropy production terms arising from each block. In particular, letting $S_k = f^{-1}(k)$ be the preimage of $k$ under $f$, we can rewrite \cref{TotalEP} as
\begin{align}
\Sigma(Q^{(n)}, p(0)) = \sum_k p^k(0)\; \Sigma\left(Q^{(n)}_{[S_k]}, p_{[S_k]}(0)/p^k(0)\right) \,,
\end{align}
where $p^k(0) = \sum_{i\in S_k} p_i(0)$, $p_{[S_k]}$ is the restriction of the distribution $p$ to the states in $S_k$, and $Q^{(n)}_{[S_k]}$ uses the notation from \cref{def:restriction}. 
Then, each $Q^{(n)}_{[S_k]}$ is irreducible and (if $q = p(0)$) exactly follows the construction specified in the Appendix D of the companion paper~\cite{qepaper}. 
In that Appendix, we prove that
\begin{equation}
\lim_{n \rightarrow \infty} \Sigma\left(Q^{(n)}_{[S_k]}, p_{[S_k]}(0)/p^k(0)\right) = 0 \,.
\end{equation}
Thus, in the $n\rightarrow \infty$ limit, the integrated entropy production vanishes.

{Third}, we note that we can build a CTMC that implements a composition of idempotents
by ``gluing together" the CTMC corresponding to each idempotent in turn.
For example, suppose we wish to implement a map $h = f \circ g$,
where $f$ and $g$ are idempotents with corresponding stochastic matrices $P_1$, $P_2$. 
Write $Q^{(n)}_1$ and $Q^{(n)}_2$ for the 
rate matrices implementing $P_1$ and $P_2$ respectively (as in \cref{eq:ratematrix}).
Then, we can implement $h$ by taking the $n \rightarrow \infty$ limit of the rate matrices
\begin{eqnarray}
Q^{(n)}(t) = \begin{cases*}
                    Q^{(n)}_1(2t) & if  $t \in [0, \frac12]$  \\ 
                     Q^{(n)}_2(2t-1) & if $t \in (\frac12, 1]$
                 \end{cases*} \,.
\label{eq:composing_ctmcs}
\end{eqnarray}

\section{Transitivity condition on one-step matrices}
\label{app:thm3}

One particularly useful property of one-step matrices involves
a kind of transitivity of probability flow, formalized as follows:

\begin{definition}
A stochastic matrix $\map$ is {transitive} if for all triples of states
$\{i,j,k\}$ such that $P_{ji}>0$ and $P_{kj}>0$, it is also true that $P_{ki}>0$.
\end{definition}

In this Supplementary Note we show that one-step matrices are transitive. To
do this we start with a pair of simple lemmas. In all of them we take $\map$ to be a matrix that is limit-embeddable by $T$, and such that the limit $T(t,t'):=\lim_{n\rightarrow\infty}T^{(n)}(t,t')$ exists for all $t,t'\in[0,1]$.
\begin{lemma}
For any $t\in[0,1]$, $T(0,1)=T(t,1)T(0,t)$.
\label{lem:prod-T}
\end{lemma}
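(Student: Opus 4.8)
The plan is to push the Chapman--Kolmogorov identity through the limit $n\to\infty$. By hypothesis $\map$ is limit-embeddable by the sequence $\{T^{(n)}\}$ and, moreover, the entrywise limit $T(s,s'):=\lim_{n\to\infty}T^{(n)}(s,s')$ exists for every pair $s\le s'$ in $[0,1]$. Fix $t\in[0,1]$. Then the three matrix families $T^{(n)}(0,1)$, $T^{(n)}(t,1)$, and $T^{(n)}(0,t)$ each converge, to $T(0,1)$, $T(t,1)$, and $T(0,t)$ respectively.

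Next, since each $T^{(n)}$ is a CTMC (with finite rates), it obeys the Chapman--Kolmogorov equation $T^{(n)}(s,s')=T^{(n)}(s'',s')T^{(n)}(s,s'')$; specializing to $s=0$, $s''=t$, $s'=1$ gives $T^{(n)}(0,1)=T^{(n)}(t,1)\,T^{(n)}(0,t)$ for every $n$. Now take $n\to\infty$ on both sides. The left-hand side converges to $T(0,1)$. For the right-hand side, matrix multiplication on the finite-dimensional space of $|\Y|\times|\Y|$ matrices is jointly continuous --- each entry of a product is a finite sum of products of entries, and each of those converges --- so $T^{(n)}(t,1)\,T^{(n)}(0,t)\to T(t,1)\,T(0,t)$. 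Equating the two limits yields $T(0,1)=T(t,1)\,T(0,t)$. The boundary cases $t=0$ and $t=1$ need no separate treatment: $T^{(n)}(s,s)=I$ for all $n$ and $s$, so $T(0,0)=T(1,1)=I$ and the identity degenerates correctly.

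The main point requiring care is, frankly, minimal: the whole argument reduces to the elementary fact that in finite dimensions the limit of a product of two convergent matrix sequences is the product of the limits. The only two things worth a sentence are (i) invoking the standing hypothesis that the limits $T(0,1)$, $T(t,1)$, $T(0,t)$ all exist, and (ii) making explicit that continuity of multiplication is what licenses interchanging the limit with the product. There is no substantive obstacle.
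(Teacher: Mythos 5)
Your proof is correct and follows essentially the same route as the paper's: apply the Chapman--Kolmogorov identity to each $T^{(n)}$ and then pass to the limit, using that in finite dimensions the limit of a product of convergent matrix sequences is the product of the limits. The extra remarks about entrywise convergence and the boundary cases $t=0,1$ are fine but not needed beyond what the paper already states.
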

\begin{proof}
Note that any embeddable CTMC in the sequence $T^{(n)}$ obeys the Chapman-Kolmogorov
equations,
\begin{equation}
T^{(n)}(0,1)=T^{(n)}(t,1) T^{(n)}(0,t)
\end{equation}
Since the limit of a product is the product of limits, we can write
\begin{align}
T(0,1) & =\lim_{n\rightarrow\infty}T^{(n)}(0,1)\nonumber \\
 & =\lim_{n\rightarrow\infty}T^{(n)}(t,1) T^{(n)}(0,t)\nonumber \\
 & =\left(\lim_{n\rightarrow\infty}T^{(n)}(t,1)\right) \left(\lim_{n\rightarrow\infty}T^{(n)}(0,t)\right)\nonumber\\
 & =T(t,1) T(0,t)\,.
\end{align}
\end{proof}
\begin{lemma}
If $\map$ is one-step and $T_{ij}(0,1)>0$  for some pair of states $i$ and $j$, then $T_{ij}(0,t)>0$
for all $t\in(0,1)$.\label{theorem:constantprob}
\end{lemma}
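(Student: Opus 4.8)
The plan is to obtain the claim directly from the two properties of the limiting CTMC $T$ that are built into the definition of a one-step matrix: the ``constant adjacency'' dichotomy (condition \ref{cond:constantadjacnecy} of \cref{def:singletimestep}) and the one-sided continuity of $T(0,t)$ up to and including $t=1$ (condition \ref{enu:continuity}). No appeal to \cref{lem:prod-T}, to the Chapman--Kolmogorov relation, or to finiteness of rates is needed.

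First I would invoke the dichotomy: since $\map$ is one-step, for the particular pair $i,j$ in the statement, either $T_{ij}(0,t)>0$ for every $t\in(0,1)$, or else $T_{ij}(0,t)=0$ for every $t\in(0,1)$. The first alternative is exactly the conclusion we want, so it suffices to rule out the second. To do that I would argue by contradiction: assume $T_{ij}(0,t)=0$ for all $t\in(0,1)$. Condition \ref{enu:continuity} of \cref{def:singletimestep} asserts that $t\mapsto T(0,t)$ is continuous on the half-open interval $(0,1]$, hence in particular continuous at the endpoint $t=1$. Taking the one-sided limit as $t\to 1^-$ then gives $T_{ij}(0,1)=\lim_{t\to 1^-}T_{ij}(0,t)=0$. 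But $T(0,1)=\lim_{n\to\infty}T^{(n)}(0,1)=\map$, so this says $\map_{ij}=0$, contradicting the hypothesis $T_{ij}(0,1)>0$. Therefore the second alternative is impossible and the first holds, which is the claim.

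There is essentially no hard step here; the one point that must be handled with care is that the continuity in \cref{def:singletimestep} is stated precisely on the \emph{semi-open} interval $(0,1]$, which deliberately includes $t=1$, so that the limit $t\to 1^-$ of $T_{ij}(0,t)$ genuinely equals the boundary value $T_{ij}(0,1)=\map_{ij}$. If continuity had only been assumed on the open interval $(0,1)$, the argument would break down; this is exactly the role of the semi-open interval noted in the remarks following \cref{def:singletimestep}, and it is worth flagging that dependence explicitly in the write-up.
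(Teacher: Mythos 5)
Your proof is correct and is essentially the paper's own argument in contrapositive form: the paper uses continuity of $T(0,t)$ on $(0,1]$ to find a $t'\in(0,1)$ with $T_{ij}(0,t')>0$ and then invokes the constant-adjacency dichotomy, while you assume the zero branch of the dichotomy and derive $T_{ij}(0,1)=0$ by the same continuity at $t=1$. The two ingredients (one-sided continuity up to $t=1$ and condition \ref{cond:constantadjacnecy}) are identical, so this is the same proof.
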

\begin{proof}
If $T_{ij}(0,1)>0$, by continuity of $T(0, t)$ in $t$, there must
be a $t'\in(0,1)$ such that $T_{ij}(0,t')>0$. The claim follows from
the definition of a one-step matrix.
\end{proof}

\begin{restatable}{theorem}{thmtransitivity}
If $\map$ is one-step it is transitive.\label{theorem:transitivity}
\end{restatable}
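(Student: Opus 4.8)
The plan is to use the Chapman--Kolmogorov decomposition of \cref{lem:prod-T} to ``route'' probability flow from $i$ to $k$ through $j$ at a suitable intermediate time. So suppose $\map$ is one-step with limiting CTMC $T$ (which exists and is itself a CTMC by the definition), and fix states $i,j,k$ with $P_{ji}=T_{ji}(0,1)>0$ and $P_{kj}=T_{kj}(0,1)>0$. The goal is to show $P_{ki}=T_{ki}(0,1)>0$.

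First I would handle the ``first half'' of the unit interval. Since $T_{ji}(0,1)>0$, \cref{theorem:constantprob} immediately gives $T_{ji}(0,t)>0$ for \emph{all} $t\in(0,1)$; that is, probability flows from $i$ to $j$ on $[0,t]$ for every $t$ in the open interval. Next I would handle the ``second half.'' Here one must be careful: the constancy condition \cref{def:singletimestep}(\ref{cond:constantadjacnecy}) is stated only for $T(0,\cdot)$, not for $T(\cdot,1)$, so it cannot be invoked directly to get positivity of $T_{kj}(t,1)$. Instead I use continuity: by \cref{def:singletimestep}(\ref{enu:continuity}), $T(t',1)$ is continuous in $t'\in[0,1)$, in particular at $t'=0$, and since $T_{kj}(0,1)>0$ this yields some $\varepsilon>0$ with $T_{kj}(t',1)>0$ for all $t'\in[0,\varepsilon)$.

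Finally, pick any $t\in(0,\varepsilon)$ and apply \cref{lem:prod-T}. Since all entries of $T(t,1)$ and $T(0,t)$ are nonnegative, keeping only the $m=j$ term of the sum gives
\[
P_{ki} = T_{ki}(0,1) = \sum_m T_{km}(t,1)\,T_{mi}(0,t) \;\ge\; T_{kj}(t,1)\,T_{ji}(0,t) \;>\; 0,
\]
which is exactly the transitivity condition. (An alternative is to first prove, via \cref{lem:prod-T}, an analogue of \cref{theorem:constantprob} for $T(\cdot,1)$ saying that $T_{kj}(t',1)$ is either positive for all $t'\in(0,1)$ or zero for all such $t'$, and then proceed identically; but the boundary-continuity argument is shorter.)

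I expect the only delicate point --- and the step I would be most careful to spell out --- is precisely this asymmetry in \cref{def:singletimestep}: the ``constant set of allowed transitions'' hypothesis is phrased for $T(0,t)$ alone, so the second-half estimate must be obtained from continuity of $T(t',1)$ at the endpoint $t'=0$ rather than from a constancy statement. Everything else is a one-line application of Chapman--Kolmogorov together with nonnegativity of transition probabilities.
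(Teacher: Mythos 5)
Your proof is correct and takes essentially the same route as the paper's: \cref{theorem:constantprob} gives positivity of $T_{ji}(0,t)$ on all of $(0,1)$, continuity of $T(\cdot,1)$ near the left endpoint gives positivity of $T_{kj}(t',1)$ for some small $t'>0$, and \cref{lem:prod-T} combines the two by dropping all but the $j$ term of the Chapman--Kolmogorov sum. The asymmetry you flag (the constant-adjacency condition being stated only for $T(0,\cdot)$) is handled in the paper exactly as you propose, via endpoint continuity of $T(t',1)$.
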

\begin{proof}
Recall that $P=T(0,1)$, and consider any three states $i$, $j$,
and $k$ such that $T_{ji}(0,1)>0$ and $T_{kj}(0,1)>0$. Given that
$T_{kj}(0,1)>0$, by continuity of $T(t, 1)$ in $t$ there must be
a $t'\in(0,1)$ such that $T_{kj}(t',1)>0$. By \cref{theorem:constantprob},
given that $T_{ji}(0,1)>0$, $T_{ji}(0,t)>0$ for all $t\in(0,1)$.
Combining with \cref{lem:prod-T} gives
\begin{multline}
T_{ki}(0,1)  =\sum_{j'}T_{kj'}(t',1)T_{j'i}(0,t') \\
\ge T_{kj}(t',1) T_{ji}(0,t')>0\,.
\end{multline}
Thus, if $P_{ji}>0$ and $P_{kj}>0$, $P_{ki}>0$.
\end{proof}

\section{Calculating time cost using products of idempotent functions}
\label{app:lemma2}

For convenience, in this Supplementary Note we define the {{adjacency matrix}} of a matrix $K$ as
\begin{align}
\adj[K]_{ij}=\begin{cases}
1 & \text{if }K_{ij}>0\\
0 & \text{otherwise}
\end{cases}.
\end{align}
It can be verified that condition \ref{cond:constantadjacnecy} of \cref{def:singletimestep} (one-step matrix) is equivalent to stating that $\adj[T(0,t)]$ is constant over $t \in (0,1)$.

We also use $\adj[\tilde{L}]_{ij}\subseteq\adj[L]_{ij}$ to indicate that
$\adj[\tilde{L}]_{ij}=0$ whenever $\adj[L]_{ij}=0$, for all states $i,j$.
\begin{lemma}
For any one-step matrix $L$, there exists a one-step matrix $\tilde{L}$
which carries out an idempotent function and which has $\adj[\tilde{L}]\subseteq\adj[L]$.
\label{lem:idem-equiv}
\end{lemma}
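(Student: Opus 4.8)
The plan is to reduce the statement to a purely combinatorial construction and then invoke two results already established: that every one-step matrix is transitive (\cref{theorem:transitivity}), and that every idempotent function over a finite set is itself one-step (\cref{lemma:mto_implementable}). Observe that if $g:\X\to\X$ is any function and $\tilde L$ is its stochastic matrix, then $\tilde L_{ij}>0$ holds precisely when $i=g(j)$, so the requirement $\adj[\tilde L]\subseteq\adj[L]$ is exactly the condition that $L_{g(j),j}>0$ for every state $j$ of $L$. Hence it suffices to exhibit an \emph{idempotent} $g$ such that, for each $j$, the state $g(j)$ is reachable from $j$ in a single step of $L$ (or equals $j$); then $\tilde L$ is one-step by \cref{lemma:mto_implementable} and we are done.

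To build such a $g$ I would turn the support of $L$ into a preorder: declare $i\preceq j$ iff $i=j$ or $L_{ij}>0$. Transitivity of this relation is exactly \cref{theorem:transitivity} (with the roles of the indices renamed appropriately), and reflexivity is built in, so $\preceq$ descends to a partial order on the equivalence classes $[i]$ given by $i\sim j \iff i\preceq j$ and $j\preceq i$. Since the state space is finite, this poset of classes has minimal elements; for each minimal class $C$ fix a representative $\rho_C\in C$, and for an arbitrary class $D$ choose some minimal class $\mu(D)\preceq D$ (taking $\mu(D)=D$ when $D$ is already minimal). Define $g(j):=\rho_{\mu([j])}$.

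Then I would check the two properties. Idempotency: $g(j)$ lies in the minimal class $\mu([j])$, on which $g$ is constant with value its own representative $\rho_{\mu([j])}$, so $g(g(j))=g(j)$. Adjacency containment, $L_{g(j),j}>0$: if $[j]$ is not minimal then $\mu([j])$ is a class strictly below $[j]$, and any element of a strictly lower class is reachable from $j$ in one step, so $L_{g(j),j}>0$; if $[j]$ is minimal and $g(j)\ne j$, then $g(j)\sim j$ forces $L_{g(j),j}>0$ directly; and if $[j]$ is minimal with $g(j)=j$, I need $L_{jj}>0$. The one genuine sublemma is therefore that minimal classes have positive diagonals: a singleton minimal class $\{j\}$ has $L_{jj}=1$ by stochasticity, and a minimal class with at least two elements contains some $i\ne j$ with $L_{ij}>0$ and $L_{ji}>0$, so transitivity of $L$ applied to the chain $j\to i\to j$ yields $L_{jj}>0$.

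I expect this diagonal sublemma — confirming that the states chosen to be fixed by $g$ genuinely carry a self-loop in $L$ — to be the only nontrivial step; everything else (transitivity of $\preceq$, existence of minimal classes, the fact that every state reaches a minimal class) is bookkeeping that follows from \cref{theorem:transitivity} and finiteness of the poset. With the construction in hand, the final line of the proof is just to apply \cref{lemma:mto_implementable} to $\tilde L$.
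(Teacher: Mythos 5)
Your proof is correct and follows essentially the same route as the paper's: both reduce the problem to exhibiting an idempotent $g$ with $L_{g(j),j}>0$ for every $j$, use the transitivity of one-step matrices (\cref{theorem:transitivity}) together with finiteness to show that self-loop states exist at the ``bottom'' of the reachability structure and that every state is one step away from one, and then invoke \cref{lemma:mto_implementable} to conclude $\tilde L$ is one-step. The paper packages this as a path-and-cycle argument in the adjacency graph (fixing every self-loop node), whereas you use a quotient poset of minimal classes and fix only one representative per minimal class, but the two constructions are interchangeable.
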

\begin{proof}
Let $G$ be the graph that corresponds to $\adj[L]$. Since $L$ is
a stochastic matrix, every node in $G$ must have at least one outgoing
edge. Since the number of nodes is finite, this means that there must
be a path from every node to at least one node in a directed cycle.
Furthermore, since $L$ is one-step,  $G$ must be transitive (\cref{theorem:transitivity}). Thus, every
node must have at least one direct edge to a node in a cycle. Furthermore,
for any node in a cycle, there is a directed path from itself back
to itself. Since $G$ is transitive, any node in a cycle must therefore
have an edge to itself (self-loop).

Thus, any node in $G$ must either have a self-loop, or must be directly
connected to at least one other node with a self-loop. For each node
without a self-loop, let $v_{i}$ indicate any node that $i$
is connected to and which has a self-loop. Define the stochastic matrix
$\tilde{L}$ in the following manner: for any node $i$ and all $j$, let
$\tilde{L}_{ji}=\delta_{i,j}$ if $i$ has a self-loop, and let $\tilde{L}_{ji}=\delta_{j,v_{i}}$
if $i$ doesn't have a self-loop. By construction, $\adj[\tilde{L}]\subseteq\adj[L]$.
It is straightforward to check that $\tilde{L}$ is idempotent: every
$i$ with a self-loop is sent to itself no matter how many times $\tilde{L}$
is applied, and every $i$ without a self-loop is sent to $v_{i}$,
no matter how many times $\tilde{L}$ is applied.

$\tilde{L}$ is one-step by \cref{lemma:mto_implementable}. 
\end{proof}

\begin{lemma}
Consider two stochastic matrices $A$ and $\tilde{A}$ over $\Y$, each expressible as a product
of $n$ stochastic matrices,
\begin{align}
A =L^{(n)}L^{(n-1)}\dots L^{(1)} \qquad \tilde{A} =\tilde{L}^{(n)}\tilde{L}^{(n-1)}\dots\tilde{L}^{(1)}
\end{align}
If for all $i=1..n$, $\adj[\tilde{L}^{(i)}]\subseteq\adj[L^{(i)}]$,
then $\adj[\tilde{A}]\subseteq\adj[A]$.
\label{lem:prod-equiv}
\end{lemma}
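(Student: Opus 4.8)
The plan is to reduce the whole statement to one elementary fact about nonnegative matrices: the support pattern of a product depends only on the support patterns of the factors (through the Boolean matrix product), and that dependence is monotone under $\subseteq$. So first I would prove the two-factor version: if $X,\tilde X,Y,\tilde Y$ are entrywise-nonnegative matrices over $\Y$ with $\adj[\tilde X]\subseteq\adj[X]$ and $\adj[\tilde Y]\subseteq\adj[Y]$, then $\adj[\tilde X\tilde Y]\subseteq\adj[XY]$. This is immediate from $(\tilde X\tilde Y)_{ik}=\sum_j \tilde X_{ij}\tilde Y_{jk}$: because every term is nonnegative, $(\tilde X\tilde Y)_{ik}>0$ holds precisely when there is some $j$ with $\tilde X_{ij}>0$ and $\tilde Y_{jk}>0$; for such a $j$ the hypotheses give $X_{ij}>0$ and $Y_{jk}>0$, so $(XY)_{ik}\ge X_{ij}Y_{jk}>0$. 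Hence $\adj[\tilde X\tilde Y]_{ik}=1$ implies $\adj[XY]_{ik}=1$.

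Second, I would extend this to $n$ factors by induction on $n$. The base case $n=1$ is exactly the hypothesis $\adj[\tilde L^{(1)}]\subseteq\adj[L^{(1)}]$. For the inductive step, set $B=L^{(n)}\cdots L^{(2)}$ and $\tilde B=\tilde L^{(n)}\cdots\tilde L^{(2)}$; the inductive hypothesis (applied to these $n-1$ factors) gives $\adj[\tilde B]\subseteq\adj[B]$. Since $A=B\,L^{(1)}$ and $\tilde A=\tilde B\,\tilde L^{(1)}$, applying the two-factor fact with $X=B$, $\tilde X=\tilde B$, $Y=L^{(1)}$, $\tilde Y=\tilde L^{(1)}$ yields $\adj[\tilde A]\subseteq\adj[A]$, completing the induction.

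The only point needing care — and the closest thing to an obstacle — is that nonnegativity of all entries is used essentially: it is exactly because there is no cancellation in $\sum_j \tilde X_{ij}\tilde Y_{jk}$ that ``some term positive'' is equivalent to ``the sum positive.'' Since the $L^{(i)}$ and $\tilde L^{(i)}$ are stochastic, hence entrywise nonnegative, this holds, and no other property (one-step, idempotent, etc.) is needed. I would also remark, as a sanity check, that the same reasoning shows $\adj[A]$ equals the Boolean product $\adj[L^{(n)}]\cdots\adj[L^{(1)}]$, so the lemma is just the statement that Boolean matrix multiplication is monotone under entrywise $\le$ of $0/1$ matrices.
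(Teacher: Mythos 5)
Your proof is correct and follows essentially the same route as the paper's: an induction on the number of factors whose inductive step rests on the fact that, for entrywise-nonnegative matrices, a product entry $\sum_j X_{ij}Y_{jk}$ is positive exactly when some single term is positive (the paper phrases this contrapositively, arguing that $A^{[k]}_{ij}=0$ forces $\tilde{A}^{[k]}_{ij}=0$). Your explicit isolation of the two-factor monotonicity statement and the remark about Boolean matrix products are harmless stylistic additions; no gap.
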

\begin{proof}
Define the following partial products,
\begin{align}
A^{[k]} & =L^{(k)}L^{(k-1)}\dots L^{(1)}  \qquad \tilde{A}^{[k]} =\tilde{L}^{(k)}\tilde{L}^{(k-1)}\dots\tilde{L}^{(1)}
\end{align}
We prove the Lemma, i.e., that $\adj[\tilde{A}^{[n]}]\subseteq\adj[A^{[n]}]$,
by induction in $k$.

Observe that since ${A}^{[1]}=L^{(1)}$ and $\tilde{A}^{[1]}=\tilde{L}^{(1)}$,
by assumption $\adj[\tilde{A}^{[1]}]\subseteq\adj[A^{[1]}]$. Now
write 
\begin{equation}
A_{ij}^{[k]}=\sum_{l}L_{il}^{(k)}A_{lj}^{[k-1]}
\end{equation}
If $A_{ij}^{[k]}=0$, this means that $\forall l\in \Y$, $L_{il}^{(k)}=0$
and $A_{lj}^{[k-1]}=0$. But since $\adj[\tilde{L}^{(k)}]\subseteq\adj[L^{(k)}]$,
$L_{il}^{(k)}=0$ implies $\tilde{L}_{il}^{(k)}=0$; similarly, given
$\adj[\tilde{A}_{lj}^{[k-1]}]\subseteq\adj[A_{lj}^{[k-1]}]$, $A_{lj}^{[k-1]}=0$
implies $\tilde{A}_{lj}^{[k-1]}=0$. Thus, if $A_{ij}^{[k]}=0$, then
it must be that
\begin{equation}
\tilde{A}_{ij}^{[k]}=\sum_{l}\tilde{L}_{il}^{(k)}\tilde{A}_{lj}^{[k-1]}=0
\end{equation}
Therefore, if $\adj[\tilde{A}_{lj}^{[k-1]}]\subseteq\adj[A_{lj}^{[k-1]}]$
and $\adj[\tilde{L}^{(k)}]\subseteq\adj[L^{(k)}]$, then $\adj[\tilde{A}_{lj}^{[k]}]\subseteq\adj[A_{lj}^{[k]}]$.
\end{proof}

\setcounter{theorem}{4}
\begin{restatable}{lemma}{towershanoi}
Suppose the stochastic matrix $\map$ over $\Y \supseteq \X$ has time cost $\auxsteps$ and
the restriction of $\map$ to $\X$ is a function $f : \X \rightarrow \X$.
Then there is a product of $\auxsteps$ idempotent functions over $\X$ whose restriction to $\X$ equals $f$.
\label{lem:towers_hanoi}
\end{restatable}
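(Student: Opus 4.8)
Suppose the stochastic matrix $\map$ over $\Y$ has time cost $\auxsteps$. By definition of time cost, $\map$ is the product of $\auxsteps$ one-step matrices, say $\map = L^{(\auxsteps)} L^{(\auxsteps-1)} \cdots L^{(1)}$, and $\map_{[\X]} = f$. My strategy is to replace each one-step factor $L^{(i)}$ by an \emph{idempotent} one-step matrix $\tilde{L}^{(i)}$ with the same or smaller adjacency structure, show the resulting product $\tilde{A} = \tilde{L}^{(\auxsteps)} \cdots \tilde{L}^{(1)}$ still implements $f$ over $\X$, and conclude. The tools for this are already in hand: \cref{lem:idem-equiv} supplies, for each one-step $L^{(i)}$, a one-step idempotent $\tilde{L}^{(i)}$ with $\adj[\tilde{L}^{(i)}] \subseteq \adj[L^{(i)}]$, and \cref{lem:prod-equiv} then gives $\adj[\tilde{A}] \subseteq \adj[A]$ where $A = \map$.

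\textbf{Key steps.} First, invoke \cref{lem:idem-equiv} on each factor to get the idempotent one-step matrices $\tilde{L}^{(i)}$; each corresponds to an idempotent function $g_i$ over $\Y$, and their composition $g := g_{\auxsteps} \circ \cdots \circ g_1$ corresponds to the $0/1$ matrix $\tilde{A}$. Second, apply \cref{lem:prod-equiv} to conclude $\adj[\tilde{A}] \subseteq \adj[\map]$. Third --- and this is the crux --- argue that $\tilde{A}_{[\X]} = f$. The inclusion $\adj[\tilde{A}] \subseteq \adj[\map]$ tells us that wherever $\map_{ji} = 0$ we also have $\tilde{A}_{ji} = 0$; restricting to $i, j \in \X$, and using that $\map_{[\X]} = f$ is a (single-valued, $0/1$) function, for each $i \in \X$ the only column entry of $\map$ over $\X$ that is nonzero is $\map_{f(i),i} = 1$. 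Hence for $j \in \X \setminus \{f(i)\}$ we get $\tilde{A}_{ji} = 0$. Since $\tilde{A}$ is a $0/1$ stochastic matrix (a composition of functions on the finite set $\Y$) its $i$-th column sums to $1$, and the mass cannot escape into $\Y \setminus \X$: indeed $\map_{ji} = 0$ for all $j \in \Y \setminus \X$ and $i \in \X$ (because the $\X$-rows of column $i$ already sum to $1$ in $\map$), so again $\adj$-containment forces $\tilde{A}_{ji} = 0$ there too. Therefore all of column $i$'s mass sits at $j = f(i)$, i.e.\ $\tilde{A}_{f(i),i} = 1$, which is exactly $\tilde{A}_{[\X]} = f$. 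Finally, restrict each $g_i$ to $\X$: since no $\tilde{A}$-mass leaves $\X$ when started in $\X$, and each $\tilde L^{(i)}$ is a genuine function on $\Y$, the restrictions $g_i|_{\X}$ need not individually map $\X$ into $\X$, but one can instead observe that we only need idempotents over $\X$ whose product is $f$; the cleanest route is to note the statement allows idempotent functions over $\X$, so from the idempotents $g_i$ over $\Y$ and the fact that their composite agrees with $f$ on $\X$, restrict to $\X$ by composing with the retraction that leaves $f(\X)$ fixed --- but more simply, re-examine: since $\img(f)\subseteq\X$ and the argument above shows mass starting in $\X$ stays in $\X$ at every intermediate stage (apply the column-sum argument to each partial product $\tilde A^{[k]}$, whose restriction to $\X$ is again a function), each $g_i$ does restrict to a well-defined idempotent function $X\to X$, and their composition is $f$.

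\textbf{Main obstacle.} The delicate point is the third step: confirming that no probability leaks out of $\X$ at intermediate times, i.e.\ that each \emph{partial} product $\tilde{A}^{[k]}$ restricts to a function $\X \to \X$ rather than merely the full product doing so. I expect to handle this by an induction on $k$ paralleling \cref{lem:prod-equiv}: if $\tilde A^{[k-1]}$ restricts to a function $\X\to\X$ and $\tilde L^{(k)}$ is an idempotent function $\Y\to\Y$, I need that $\tilde L^{(k)}$ maps $\img(\tilde A^{[k-1]}_{[\X]})\subseteq \X$ back into $\X$ --- this is where the adjacency containment $\adj[\tilde A^{[k]}]\subseteq\adj[\map]$ combined with $\img(f)\subseteq\X$ does the work, since $\map$'s columns indexed by $\X$ have support in $\X$. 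Everything else is a routine assembly of \cref{lem:idem-equiv}, \cref{lem:prod-equiv}, and the bookkeeping that a $0/1$ stochastic matrix with column-support $\{f(i)\}$ for each $i$ is precisely the matrix of $f$.
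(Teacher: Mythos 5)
Your main line of argument is exactly the paper's: decompose $\map$ into $\auxsteps$ one-step factors, replace each by an idempotent one-step matrix via \cref{lem:idem-equiv}, propagate the adjacency containment through the product via \cref{lem:prod-equiv}, and then observe that a $0/1$ stochastic matrix whose zero set contains that of $\map$ must agree with $\map$ on the columns indexed by $\X$, hence restrict to $f$. That part is correct and complete.

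However, the final digression --- what you call the ``main obstacle'' --- contains a genuine error. You assert that ``mass starting in $\X$ stays in $\X$ at every intermediate stage,'' so that each partial product $\tilde A^{[k]}$ restricts to a function $\X \to \X$ and each idempotent $g_i$ restricts to an idempotent on $\X$. This is false in general: the whole point of hidden states is that probability \emph{does} pass through $\Y \setminus \X$ at intermediate times (see the bit-flip construction in \cref{ex:capacitor}, whose first idempotent sends $1$ into the hidden state $2$). The adjacency containment you invoke only compares $\tilde A^{[k]}$ to the partial product $A^{[k]}$ of the original one-step factors, and $A^{[k]}$ itself can place mass on hidden states, so no column-sum argument forces the partial products back into $\X$. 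Fortunately none of this is needed: the lemma (as the paper proves and uses it) asks only for idempotent functions over the full space $\Y$ whose \emph{product} restricts to $f$ on $\X$; the individual factors need not preserve $\X$. Drop the last step and your proof coincides with the paper's.
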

\begin{proof}
By hypothesis we can write $\map = \stablematrix^{(\auxsteps)}\stablematrix^{(\auxsteps-1)}\dots \stablematrix^{(1)}$ where each $\stablematrix^{(i)}$ is one-step. 
By \cref{lem:idem-equiv}, for each $\stablematrix^{(i)}$ there is another one-step matrix $\tilde{\stablematrix}^{(i)}$ which carries out an idempotent function, and which has $\adj[\tilde{\stablematrix}^{(i)}] \subseteq \adj[\stablematrix^{(i)}]$.
By \cref{lem:prod-equiv}, the product of these idempotent functions, $\tilde{\map} = \tilde{\stablematrix}^{(\auxsteps)}\tilde{\stablematrix}^{(\auxsteps-1)}\dots \tilde{\stablematrix}^{(1)}$, obeys $\adj[\tilde{\map}] \subseteq \adj[\map]$.

The restriction of $\map$ to $\X$ implements the single-valued function $f: \X\rightarrow \X$, meaning that $\map_{ji} = \delta_{f(i),j}$ for all $i \in \X$. Therefore, it must be that $\tilde{\map}_{ji} = \delta_{f(i),j}$ for all $i \in \X$, since otherwise $\tilde{\map}$ would have a nonzero entry in a location where $\map$ has a 0 entry (contradicting $\adj[\tilde{\map}] \subseteq \adj[\map]$).  Therefore, the restriction of $\tilde{\map}$ to $\X$ must  equal $f$.
\end{proof}

As an aside, \cref{lem:towers_hanoi} tells us that if $\X = \Y$, and
$\map$ is single-valued and one-step (so $\auxsteps = 1$), then $\map$ must be an idempotent function.

\section{Time cost where visible states are macrostates}

\setcounter{theorem}{12}
\begin{restatable}{theorem}{thmmacromicro}
Assume $\fm : \Xm \rightarrow \Xm$ can be implemented with $\nm$ microstates and $\auxsteps$ timesteps. Then there is a stochastic matrix $W$ over 
a set of $\nm$ states $\Y$, a subset $\X\subseteq \Y$ with $|\X| = |\Xm|$, 
and a one-to-one mapping $\om : \Xm \rightarrow \X$ such that
\betight
	\item $W$ is a product of $\auxsteps$ one-step matrices
	\item The restriction of $W$ to $\X$ carries out the function $f(x) := \om(\fm(\om^{-1}(x))$
\eetight
\label{thm:macromicro}
\end{restatable}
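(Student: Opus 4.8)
The plan is to turn the coarse-grained implementation into a product of idempotent functions on $\Y$ and then single out one microstate per block to play the role of the corresponding macrostate. First I would unpack \cref{def:cg}: it supplies a set $\Y$ with $|\Y|=\nm$, a partial function $g:\Y\to\Xm$ with $\img(g)=\Xm$, and a product of one-step matrices $M=\stablematrix^{(\auxsteps)}\cdots\stablematrix^{(1)}$ over $\Y$ such that for every $i\in\dom(g)$ all probability leaving $i$ under $M$ ends up in the block $g^{-1}(\fm(g(i)))$. Applying \cref{lem:idem-equiv} to each factor and then \cref{lem:prod-equiv} yields a product of idempotent one-step matrices $\tilde M=\tilde\stablematrix^{(\auxsteps)}\cdots\tilde\stablematrix^{(1)}$ with $\adj[\tilde M]\subseteq\adj[M]$. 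Since each factor is single-valued, so is $\tilde M$; call the function $\mu:\Y\to\Y$. The adjacency containment together with the block condition on $M$ force $\mu(i)\in g^{-1}(\fm(g(i)))$ for every $i\in\dom(g)$, so $\mu$ maps $\dom(g)$ into itself and satisfies $g\circ\mu=\fm\circ g$ there; i.e. $\mu$ restricted to $\dom(g)$ is a surjective semiconjugacy onto $\fm$.

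Next I would pick, for each $z\in\Xm$, a representative microstate $x_z\in g^{-1}(z)$, choosing $x_z\in\img(\mu)$ whenever $g^{-1}(z)\cap\img(\mu)\ne\varnothing$ (this refinement is used below). Put $\X:=\{x_z:z\in\Xm\}$ and $\om(z):=x_z$; the $x_z$ lie in distinct blocks, so $|\X|=|\Xm|$ and $\om$ is one-to-one, and the target function on $\X$ is $f=\om\circ\fm\circ\om^{-1}$. Let $\pi:\Y\to\Y$ be the idempotent ``collapse-to-representative'' function, $\pi(i)=x_{g(i)}$ on $\dom(g)$ and $\pi(i)=i$ off $\dom(g)$. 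A one-line computation gives $(\pi\circ\mu)(x_z)=x_{\fm(z)}$, so $W_0:=\pi\circ\mu$ restricts on $\X$ to $f$ and leaves $\X$ invariant.

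The hard part is that $W_0$, written as $\pi\circ\tilde\stablematrix^{(\auxsteps)}\cdots\tilde\stablematrix^{(1)}$, is a product of $\auxsteps+1$ one-step matrices, whereas the theorem demands $\auxsteps$; so I must absorb the block-collapse $\pi$ into the last factor. With $N:=\tilde\stablematrix^{(\auxsteps-1)}\cdots\tilde\stablematrix^{(1)}$, I would define $L'$ to agree with $\pi\circ\tilde\stablematrix^{(\auxsteps)}$ on $\img(N)$ and to be the identity elsewhere, so that $L'\circ N=\pi\circ\mu$, and then show $L'$ is idempotent. The ingredients are: $\img(\mu)=\tilde\stablematrix^{(\auxsteps)}(\img N)\subseteq\mathrm{Fix}(\tilde\stablematrix^{(\auxsteps)})$ because $\tilde\stablematrix^{(\auxsteps)}$ is idempotent; $\pi$ fixes every point of $\X$ and every point outside $\dom(g)$; and any $x_z$ that can arise as $(\pi\circ\tilde\stablematrix^{(\auxsteps)})(y)$ for $y\in\img(N)$ automatically lies in $\img(\mu)\cap g^{-1}(z)$, which is exactly why $x_z$ was chosen there. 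Together these show $\img(L')$ consists of $L'$-fixed points, giving idempotency, and hence $W:=L'\circ\tilde\stablematrix^{(\auxsteps-1)}\circ\cdots\circ\tilde\stablematrix^{(1)}$ is a product of $\auxsteps$ idempotents, each one-step by \cref{lemma:mto_implementable}, with $W=W_0$ restricting on $\X$ to $f$ --- which is the assertion, with $|\X|=|\Xm|$ and $\om$ as constructed. A secondary point to watch is that $g$ may be genuinely partial, so the semiconjugacy step and the representative argument must be phrased without assuming $g$ is total; the rule ``take $x_z\in\img(\mu)$ whenever possible'' is precisely what makes the partial case work, and I expect checking idempotency of $L'$ (rather than any of the routine computations) to be the only real obstacle.
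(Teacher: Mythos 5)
Your proposal is correct and follows essentially the same route as the paper's proof: convert each one-step factor into an idempotent via \cref{lem:idem-equiv} and \cref{lem:prod-equiv}, choose one canonical microstate per macrostate, and absorb the collapse-to-representatives map into the last idempotent so that the factor count stays at $\auxsteps$. The only differences are bookkeeping choices --- you select representatives from $\img(\mu)$ and let the modified last factor act as the identity off $\img(N)$, whereas the paper selects them from $D=\img(\gamma)\cap\dom(g)$ and modifies $\gamma$ globally --- and the idempotency check for the modified last factor goes through in both versions.
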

\begin{proof}
Assume $\hat{f}$ is implemented with $n$ microstates and $\auxsteps$ timesteps by the coarse-graining function $g$ and stochastic matrix $M$.
By definition, $M = \stablematrix^{(\auxsteps)}\stablematrix^{(\auxsteps-1)}\dots \stablematrix^{(1)}$ where each $\stablematrix^{(i)}$ is one-step. 
By \cref{lem:idem-equiv} and \cref{lem:prod-equiv}, there exists a matrix
\begin{align}
V = \tilde{\stablematrix}^{(\auxsteps)}\tilde{\stablematrix}^{(\auxsteps-1)}\dots \tilde{\stablematrix}^{(1)}
\label{eq:mprime}
\end{align} which obeys $\adj[V] \subseteq \adj[M]$, and where each $\tilde{\stablematrix}^{(i)}$ carries out an idempotent function.

Condition \ref{cond:sum} of \cref{def:cg} states that $\sum_{j \in g^{-1}(\fm(g(i))} M_{ji} = 1$ for all $i \in \dom(g)$, or equivalently that $\sum_{j \not\in g^{-1}(\fm(g(i))} M_{ji} = 0$.
Since the set of zero entries in $V$ is a superset of those in $M$, it is easy to see that if $M$ satisfies Condition 3, then so must $V$. Thus, $g$ and $V$ also implement $\hat{f}$ with $n$ microstates and $\auxsteps$ timesteps.

Let $\gamma : \Y \rightarrow \Y$ represent the idempotent function carried out by $\tilde{\stablematrix}^{(\auxsteps)}$.  Define the set
\begin{align}
D := \img(\gamma) \cap \dom(g) \,,
\label{eq:dset-def}
\end{align}
so that $D$ is the set of microstates which are in the image of $\gamma$ and which have a macrostate defined. Note that the image of any idempotent function consists only of fixed points of that function. Since $D \subseteq \img(\gamma)$, $D$ thus contains only fixed points of $\gamma$.

We now define a one-to-one function $\om : \Xm\rightarrow \Y$ from macrostates to microstates which maps every macrostate $z$ to one particular ``canonical" microstate contained in that macrostate. Formally, we require $\om$ to obey the following two conditions (any $\om$ which obeys these conditions suffices): 
\betight
\item For all $z\in g(D)$, $\om(z) \in g^{-1}(z) \cap D$ (i.e., every macrostate that has a microstate in $D$ is mapped by $\om$ to one of its own microstates in $D$)
\item For all $z\not\in g(D)$, $\om(z) \in g^{-1}(z)$ (i.e., every microstate that does not have a microstate in $D$ is mapped to one of its own microstates)
\eetight
Note that $\om$ is one-to-one since the sets $g^{-1}(z)$ are non-overlapping for different $z$.
Note also that for any $y\in \img(\om)$, $\om^{-1}(y) = g(y)$.

We now construct a ``modified" function $\gamma' : \Y \rightarrow \Y$ in the following manner,
\begin{align}
\gamma'(y) = \begin{cases}
\om(g(\gamma(y))) & \text{if $\gamma(y) \in D$} \\
\gamma(y) & \text{otherwise}
\end{cases}
\label{eq:gammaprimedef}
\end{align}
In words, $\gamma'$ is similar to $\gamma$, but its outputs are canonical microstates where possible. Below, we show two things: first that $\gamma'$ is idempotent, and second that if we replace $\gamma$ by $\gamma'$, we will still implement $\fm$.

To show that $\gamma'$ is idempotent, we demonstrate that $\img(\gamma')$ consists only of fixed points of $\gamma'$. To do so, we consider two cases separately:
\betight
\item $y\in \Y$ with $\gamma(y) \not\in D$, for which $\gamma'(y) = \gamma(y)$. Note that since $\gamma(y)$ is idempotent, $\gamma(\gamma(y)) = \gamma(y) \not\in D$, and therefore $\gamma'(\gamma'(y))=\gamma'(\gamma(y))=\gamma(\gamma(y))=\gamma(y)=\gamma'(y)$.

\item $y\in \Y$ with $\gamma(y) \in D$, for which $\gamma'(y) = \om(g(\gamma(y))$. In this case,
$g(\gamma(y)) \in g(D)$, so by construction $\om(g(\gamma(y))) \in D$, thus
$\gamma'(\gamma'(y)) = \gamma'(\om(g(\gamma(y)))) = \om(g(\gamma(\om(g(\gamma(y))))))$.
As mentioned above, all elements in $D$ are fixed points of $\gamma$, so we can write 
$\gamma(\om(g(\gamma(y)))) = \om(g(\gamma(y)))$ to give
$\gamma'(\gamma'(y)) = \om(g(\om(g(\gamma(y)))))$.
Furthermore, by construction of $\om$, $\om(z) \in g^{-1}(z)$, thus $g(\om(g(\cdot)) = g(\cdot)$, so we can further rewrite $\gamma'(\gamma'(y)) = \om(g(\gamma(y)) = \gamma'(y)$.
\eetight
This proves that $\gamma'$ is idempotent.

We now show that we still implement $\fm$ if instead of the last matrix carrying out $\gamma$, it instead carries out $\gamma'$. Let $\tilde{\stablematrix}'^{(\auxsteps)}$ be the one-step matrix that encodes function $\gamma'$, and define the stochastic matrix
\begin{align}
W = \tilde{\stablematrix}'^{(\auxsteps)}\tilde{\stablematrix}^{(\auxsteps-1)}\dots \tilde{\stablematrix}^{(1)}
\label{eq:wdef}
\end{align}
Now consider any $i \in \Y$, and let $j$ indicate the output state such that $V_{ji} = 1$, where $V$ is as defined in \cref{eq:mprime}.  Let $j'\in \Y$ be the final state such that $W_{j',i}=1$. We now note two things:
\betight
\item[(a)] By the definition of $\gamma'$ in \cref{eq:gammaprimedef}, it must be that either $j' = j$ (in case $j \not\in D$) or $j' = \om(g(j))$ (in case $j\in D$).  In either case, $g(j') = g(j)$ (in the former case trivially, and in the latter case since $g(\om(g(\cdot)) = g(\cdot)$, as mentioned before). It is easy to verify that if Condition~\ref{cond:sum} of \cref{def:cg} holds for $V$, it must also hold for $W$; thus, $W$ in \cref{eq:wdef} also implements $\fm$ with $\nm$ microstates and $\auxsteps$ timesteps.
\item[(b)] Consider the case when $i \in \dom(g)$ (i.e., the initial state belongs to some macrostate).  In that case, $j \in \dom(g)$ by Condition \ref{cond:sum} of \cref{def:cg}. In addition, $j$ is clearly always within $\img(\gamma)$.  Thus, when $i \in \dom(g)$, $j \in D$ (by \cref{eq:dset-def}) and $j' \in \img(\om)$ (by \cref{eq:gammaprimedef}).
\label{cond:2}
\eetight

Finally, define $\X := \img(\om)$ (i.e., the set of ``canonical" microstates). By definition of $\om$, $\X \subseteq \dom(g)$ (and therefore also $\X\subseteq \Y$). Note also that $|\Xm| = |\X|$, since $\om$ is one-to-one.
Consider the restriction of $W$ to $\X$, which we indicate by $W^\X$. Since $W$ is a product of 0/1 valued stochastic matrices, both $W$ and its restriction $W^\X$ are 0/1 valued.  Furthermore, for any input state $i \in \X$, $i \in \dom(g)$; therefore, by Condition~{(b)} in the above list, the $j$ that satisfies ${W}_{ji}=1$ itself obeys $j \in \X$. Combining these results with Condition~\ref{cond:sum} of \cref{def:cg} states that $W^\X$ is a valid stochastic matrix that carries out
\begin{equation}
W^\X_{ji} = \delta(j, \omega(\hat{f}(g(i)))) = \delta(j, \omega(\hat{f}(\omega^{-1}(i)))) \,,
\end{equation}
where we've used the fact that $\omega^{-1}=g$ over $\X$.
\end{proof}

\section{Restricted set of idempotents}

\label{restrictedidempotents}

To illustrate some of the issues a restriction on which idempotents can be implemented raises, consider the case where our full system is a set of $N$ visible spins plus an unspecified set of hidden spins. Suppose the only idempotent functions we can apply to our system are those that affect either one or two spins at a time, leaving all the others unchanged. Physically, this would mean that the Hamiltonian of our system is a sum of one-spin and two-spin terms. (We then implement an idempotent function by dynamically altering the relative strengths of those terms.)

We can implement any function over the set of $N$ spins using this set of idempotent
functions if the set of hidden spins is large enough --- so long as the idempotent functions allow us to change any set of
one or two spins. (The analysis if we can only change pairs of spins that
are neighbors on a lattice, as in an Ising spin, is more complicated.)
To see this, first note that we can use such an idempotent function to copy the state of a spin into a different ``target'' spin.
By repeating this function with different target spins, we can make any desired number of copies of the original spin. Next, note that another of our allowed idempotent functions
maps any spin-pair $(x_1, x_2) \rightarrow (0, \cs{NAND}(x_1, x_2))$, i.e., evaluates the $\cs{NAND}$ of the two spins
and stores the result in the second spin. So if we make a copy of both $x_1$ and $x_2$, and then run this
$\cs{NAND}$ idempotent function on that pair of copy-bits, we will have implemented a 
full $\cs{NAND}$ gate whose input bits were $x_1$ and $x_2$ and whose output bit is $\cs{NAND}(x_1, x_2)$. (We will also have zeroed
the copy-bit that doesn't equal $\cs{NAND}(x_1, x_2)$, but that doesn't matter.)

Now $\cs{NAND}$ is a universal logical gate, meaning that we can implement
any Boolean function $f : \{0,1\}^N \rightarrow \{0,1\}^N$ by appropriately connecting $\cs{NAND}$ gates~\cite{mano2008logic} into one another. 
(In general, such an implementation will require that some of the gates have fanout greater than 1 --- 
but we can implement an arbitrary fanout, by repeated using our bit-copy idempotent function.) So by using enough hidden states and an appropriate set
of two-spin idempotent functions, we can evaluate the (arbitrary)  function $f$ of the $N$ visible spins, storing the
resultant output in $N$ of the hidden spins. At that point we can copy the (hidden) output back to the (visible) input bits, thereby completing 
the process of running $f$ on those input bits.

In general, implementing $f$ with this construction will require more hidden states and more hidden timesteps than would implementing it using arbitrary idempotent functions. 
However, calculating the associated increase in the space and time costs can be quite challenging. 
The time cost in our construction is given by the depth of the circuit of $\cs{NAND}$ gates and the fanouts of those gates. 
On the other hand, the number of hidden states is 
determined by the number and type of gates in that circuit. The analysis of how these quantities
and their tradeoff depends on the function $f$
is closely related to ongoing research in circuit complexity theory~\cite{arora2009computational,savage1998models}. 

Moreover, there seems to be no reason to believe that using our set of allowed idempotent functions to make circuits of $\cs{NAND}$ gates is the most efficient way to use them. In general there will be a complicated tradeoff between re-using hidden
spins to implement multiple gates (thereby reducing the total number of hidden spins needed) and increasing the number of gates that can be operated in parallel (which reduces the total number of timesteps).

\bibliographystyle{unsrtnat}

\end{document}